\newtheorem{theorem}{Theorem}
\newtheorem{corollary}{Corollary}
\newtheorem{lemma}{Lemma}
\newtheorem{observation}{Observation}
\newtheorem{protocol}{Protocol}
\def\>{\rangle}
\def\<{\langle}
\newcommand{\map}[1]{\mathcal{#1}}
\newcommand{\ga}[1]{{\color{blue}#1}}
\begin{document}

\title{
Random access Bell game by sequentially measuring the control of the quantum SWITCH
}

\author{Gaurang Agrawal$^1$, Saptarshi Roy$^2$}
\affiliation{$^1$Indian Institute of Science Education and Research, Homi Bhabha Rd, Pashan, Pune 411 008, India}
\affiliation{$^2$
QICI Quantum Information and Computation Initiative, Department of Computer Science, The University of Hong Kong, Pokfulam Road, Hong Kong	
	}

\begin{abstract}

Preserving quantum correlations such as Bell nonlocality in noisy environments remains a fundamental challenge for quantum technologies.
We introduce the Random Access Bell Game (RABG), a task where an entangled particle propagates through a sequence of identical noisy blocks, and the ability to violate a Bell inequality is tested at a randomly chosen point (access node). We consider a scenario where each noisy block is composed of two complete erasure channels, an extreme entanglement-breaking channel with vanishing quantum and classical capacities. We investigate the performance of the Random Access Bell Game in this configuration and attempt to mitigate the effect of noise by coherently controlling the order of each channel in the noise using the quantum {\tt SWITCH}. However, the quantum {\tt SWITCH} in its canonical setup with a coherent state in the control fails to provide any advantage in the Random Access Bell Game. Our main contribution is a protocol that leverages initial entanglement between the target and control of the quantum {\tt SWITCH} and employs sequential, unsharp measurements on the control system, showing that it is possible to guarantee a Bell violation after an arbitrarily large number of channel applications. Furthermore, our protocol allows for a near-maximal (Tsirelson bound) Bell violation to be achieved at any desired round, while still ensuring violations in all preceding rounds. We prove that this advantage is specific to generalized Greenberger-Horne-Zeilinger (GHZ) states, as the protocol fails for W-class states, thus providing an operational way to distinguish between these two fundamental classes of multipartite entanglement.

%We investigate this game in a setting where the noisy channel is constructed from two distinct entanglement-breaking maps, whose order of application can be controlled. We demonstrate that conventional probabilistic control over the channel order fails to preserve nonlocality. We then propose a protocol that utilizes the quantum {\tt SWITCH} to place the channel order under coherent quantum control. By employing sequential, unsharp measurements on the control system, we show that it is possible to guarantee a Bell violation after an arbitrarily large number of channel applications. Furthermore, our protocol allows for a near-maximal (Tsirelson bound) Bell violation to be achieved at any desired round, while still ensuring violations in all preceding rounds. This advantage is shown to be specific to Greenberger-Horne-Zeilinger (GHZ)-type entanglement, as the protocol fails for W-class states. Our results reveal a powerful method for protecting quantum nonlocality from repeated interactions with entanglement-breaking environments through the combined use of indefinite causal order and unsharp measurements.
\end{abstract}	

\maketitle

\section{Introduction}
\label{sec:intro}
\iffalse
and serving as a critical resource for quantum communication, cryptography, and computation .   \ga{Add more refs }.

\ga{
\begin{itemize}
\item existing literature in sequential
\item W vs GHZ states operational difference
\item  $\alpha$ in terms of concurrence shows the relation with entanglement measure
\item comment on partial erasure channels: more rounds with more noise
\end{itemize} 
}
\fi

Bell nonlocality stands as a cornerstone of quantum mechanics, revealing a stark departure from the classical principles of local realism \cite{Bell1964on, Clauser1969Oct, Brunner2014apr}. Violation of local realism has conceptually altered the understanding of quantum mechanics and the nature of quantum correlations in general \cite{Cirelson1980, navascas2007, Barizien2025, Le2025}. 
It has been demonstrated extensively in experiments \cite{bexp1, bexp2, bexp3} with recent breakthroughs in loophole-free Bell tests as well \cite{Hensen2015, Giustina2015, loop}.
Apart from foundational interests, Bell inequality violations turn out to be the crucial ingredient for various quantum information processing tasks. A non-exhaustive list includes tasks that range from device-independent cryptography \cite{diqkd1, diqkd2, diqkd3, diqkd4}, random number generation \cite{Pironio2010, Tan2016}, to reducing the communication complexity of certain problems \cite{comcomplexity1, comcomplexity2, comcomplexity3}. All these have motivated investigations of Bell inequality violation in physical platforms like continuous variable and quantum optical systems \cite{Walls1994, Chen2002, roy2018, Steinacker2025} and many-body systems \cite{Wang2002, Sadhukhan2015, Lee2022, Getelina2018} to make predictions that can be compared with actual experiments.
Another line of fundamentally important research involves the investigation of the relationship between Bell inequality violation and other facets of nonclassicality, like entanglement \cite{Popescu2014Bell,Buscemi2014All,Soorya2019Universality}. However, the quantum correlations that enable nonclassical effects are notoriously fragile, readily degrading through interaction with a noisy environment \cite{error1, error2}. Understanding and mitigating \cite{errormitigation1, em2} the impact of noise is therefore paramount for the development of robust quantum technologies \cite{Becher2023}. 

In this paper, we address this challenge by conceptualizing the task of a ``Random Access Bell Game." In this task, an entangled state is shared between two parties, Alice and Bob. Bob's particle traverses a noisy environment, modelled as a series of discrete interactions with identical noisy quantum channels. A referee can, at any point, randomly request that Alice and Bob perform a Bell test on their shared state. The central question is to determine the maximum number of noisy interactions, $k_{\max}$, the system can endure while retaining the ability to violate a Bell inequality, regardless of which ``access node" $k \leq k_{\max}$ the referee chooses. See Fig. \ref{fig:enter-label} for a schematic of the game.
For many types of noise, particularly entanglement-breaking channels \cite{Horodecki2003, Ruskai2003}, the ability to win this game is severely limited, with $k_{\max}$ often being zero. Specifically, we consider an extreme scenario where the noise is 
characterized by complete erasure channels \cite{erasure, eco}, which we refer to as pin maps. They possess vanishing classical and quantum capacities \cite{erasure}.

The primary contribution of this work is a novel protocol that overcomes this limitation by leveraging the quantum {\tt SWITCH} \cite{Chiribella2013Aug} in conjunction with sequential, unsharp measurements.
Over the last decade, the quantum {\tt SWITCH} has gathered a lot of attention in error mitigation providing advantages in various tasks ranging from quantum communication \cite{Chiribella2019, comswitch2, Chiribella2021}, quantum metrology \cite{sensing}, and quantum work extraction \cite{work1, work2, work3} to name a few. Many of these theoretical advantages have also been simulated in experiments as well \cite{exp1, exp2, exp3}. At the same time, unsharp measurements have also gained a lot of popularity for their role in tasks that involve probing the system sequentially. Some of the prominent examples of such tasks involve sequential violation of Bell inequalities \cite{ub1, ub2, ub3, ub4}, steering \cite{s1,s2}, witnessing entanglement \cite{ue1,ue2,ue3}, and even communication tasks like quantum teleportation \cite{Roy2021}.

 In our protocol, we demonstrate that by repeatedly performing weak (unsharp) measurements on the control qubit that preshares entanglement with the target system of the {\tt SWITCH}, one can certify a Bell violation at each stage, for multiple rounds while minimally disturbing the system, thereby preserving the quantum entanglement required for obtaining a violation in subsequent rounds. In particular, we identify that unlike projective measurements which yield maximal violation in a single deterministic round at the cost of all future nonlocality, uhsharp measurements introduce a trade-off of available Bell violation across access nodes. A weaker measurement at a given round yields a smaller violation but preserves more entanglement, enabling stronger violations in subsequent rounds. This trade-off can be utilized to control the available violation depending on the measurement sharpness.

 The quantum {\tt SWITCH} allows for the order of the noisy operations to be governed by a control qubit. 
  Indeed, a key component in the success of our protocol comes from the structure of the effective map after the {\tt SWITCH} action on two pin maps. While each pin map individually destroys all entanglement, the quantum {\tt SWITCH} combines them into a single operation whose resulting channel is no longer entanglement-breaking. The crucial effect of this {\tt SWITCH}ing action is the creation of a protected two-dimensional subspace. In our work, we consider the two erasure channels to pin every state to two orthogonal states that we label as $\ket0$ and $\ket1$ respectively. The quantum {\tt SWITCH} with these two pin maps possesses a decoherence-free subspace spanned by the vectors $\{ \ket{00}, \ket{11}\}$. This 
  right away implies that we can expect best success for initial states whose support in Bob and the Control's is restricted to the $\{ \ket{00}, \ket{11}\}$ subspace. To this end, we show that the SWITCH in the canonical form, i.e., with a coherent state in the control, is useless for the ``Random Access Bell Game." Our analysis reveals that a necessary condition for obtaining success in the Random Access Bell Game is entanglement between the quantum particles in possession of Alice and Bob, with the control of the quantum SWITCH.
  The requirement of entanglement along with the decoherence-free subspace spanned by $\{ \ket{00}, \ket{11}\}$ motivates us to choose the generalized GHZ state \cite{Mermin1990} as
 the prototypical initial state for which we demonstrate our results. Physically, unique multipartite correlations of a generalized GHZ state align perfectly with the protected subspace created by the {\tt SWITCH}, allowing it to survive the noisy evolution. 

In this paper, we derive the exact analytical form of the tripartite state after an arbitrary number of rounds, allowing us to precisely quantify the Bell violation as a function of the measurement sharpness. From this, we demonstrate that for an initial generalized GHZ state, a Bell violation can be maintained for an arbitrarily large number of rounds. We then show how to strategically exploit the identified trade-off by setting the measurement sharpness parameters in a geometric progression. This strategy allows a Bell violation arbitrarily close to the Tsirelson bound \cite{Cirelson1980} to be achieved at any desired target round, while still guaranteeing a violation in all preceding rounds. We complement these analytical findings with a numerical analysis that quantifies the persistence of nonlocality, calculating the maximum number of rounds for which a specific, predefined threshold of Bell violation can be guaranteed. We finally contrast the success of the generalized GHZ state to the case where we consider the initial state between Alice, Bob and the control to be a W-state \cite{durvidalcirac}.
We prove that the protocol fails completely when the initial resource is a W state. The W state's distinct correlation structure is not preserved by the {\tt SWITCH} action, causing the state to become fully separable after the first round and thus providing a clear operational distinction between the two SLOCC (Stochastic Local Operations and Classical Communication) inequivalent classes of three qubit pure states \cite{durvidalcirac}.

The paper is organized as follows. After this introduction in Sec.~\ref{sec:intro}, we formally define the Random Access Bell Game in Sec.~\ref{sec:game}. In this section, we also analyze preliminary strategies, demonstrating the limitations of canonical approaches that do not involve pre-shared entanglement with the control system. Our main protocol, which leverages the quantum {\tt SWITCH} in conjunction with sequential unsharp measurements and a GHZ-type initial state, is presented in Sec.~\ref{sec:protocol_analysis}. We then generalize our results to an arbitrary number of rounds in Sec.~\ref{sec:k_round}, where we provide the central proofs that Bell violation can be maintained indefinitely and that near-maximal violation is achievable at any chosen round. In Sec.~\ref{sec:w_state}, we highlight the necessity of GHZ-type entanglement by proving the protocol's failure for W states. Finally, we summarize our findings and discuss their broader implications in Sec.~\ref{sec:conclusion}. Detailed proofs and calculations are deferred to the Appendices along with a brief section on preliminaries.

\section{Random Access Bell Game}
\label{sec:game}
\begin{figure}[t]
    \centering
    \includegraphics[width=\linewidth]{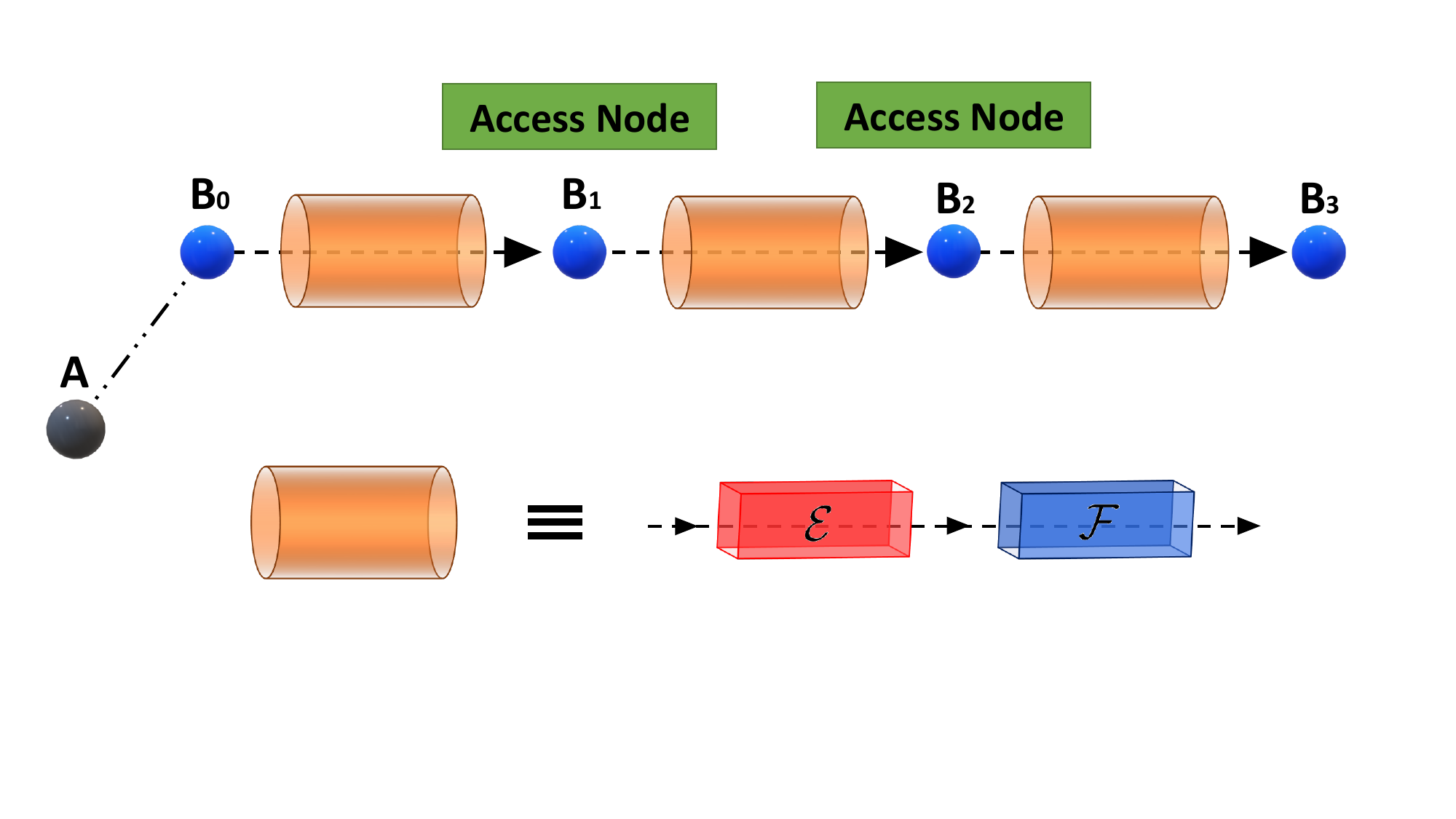}
    \caption{\textbf{Schematic of the random access Bell game.} The subsystem at B repeatedly passes through identical noise blocks. The particle can be picked randomly from any access nodes and used to assess the Bell violation.}
    \label{fig:enter-label}
\end{figure}
The setting of the game involves a pre-shared entangled state $\rho_{AB_0}$ between Alice ($A$) and Bob ($B$). The subscript $0$ denotes the initial setting. Now the particle with Bob is transmitted through a series of identical noisy channels, say, $\Lambda$. The output state after passing through $N$ iterations of the channel is denoted by $\rho_{AB_N} := (\mathbb I \otimes \Lambda)^{\otimes N}(\rho_{AB_0})$. The subsystem $B_k$  can be accessed for measurement after any round $k$ of channel actions. A third party, which we refer to as the referee $R$, \emph{randomly} dictates after how many channel actions $k$ the state can be extracted. We name the locations after each channel action from which the particle can be extracted as an \emph{access node}. Now the central question is whether $\rho_{AB_k}$ violates a Bell inequality. Or more generally, what is the maximum value $k_{\max}$ up to which violation is possible when the particle is randomly extracted from any access node $k\leq k_{\max}.$

If $\Lambda$ is a perfect (noiseless) channel, then $k_{\max}$ can be arbitrarily large, provided the initial state $\rho_{AB_0}$ violates, in particular, the CHSH inequality \cite{Clauser1972}. This inequality tests for nonlocality by constraining the correlations between local measurements; a violation occurs when the measured Bell parameter $|\mathcal{B}|$ exceeds the classical bound of 2, which is impossible under local realism but permitted by quantum mechanics. For a detailed description of the formalism, please see Appendix \ref{sec:preliminaries}.

The situation becomes interesting when $\Lambda$ is noisy.
For example, when $\Lambda$ is an entanglement-breaking channel, $k_{\max} = 0$ since $\mathbb I \otimes\Lambda (\rho_{AB_0}) = \rho_{AB_1}$ is a separable state and hence does not violate a Bell inequality.
The particular noisy channel we consider is one where $\Lambda$ is composed of entanglement-breaking pin maps, $\map E$ and $\map F$ whose actions are given as follows
\begin{eqnarray}
    \map E(\rho) = \ketbra{0}{0} ~\forall \rho, ~~~\map F(\rho) = \ketbra{1}{1} ~\forall \rho.
\end{eqnarray}
The noisy channel is described by a probabilistic mixture (a causally separable configuration) of the channels $\map E$ and $\map F$
\begin{eqnarray}
    \Lambda_p = p ~\map E \circ \map F + (1-p) ~\map F \circ \map E = p ~\map E + (1-p) ~\map F.
\end{eqnarray}
Even if there were a control system $C$ that possessed the power to tune the probability $p$, the output state of the channel would still be separable, 
\begin{eqnarray}
    \mathbb{I}\otimes\Lambda_p(\rho_{AB_0}) = \rho_A \otimes \big(p \ketbra{0}{0}+(1-p)\ketbra{1}{1}\big),
\end{eqnarray}
where $\rho_A = \Tr_B(\rho_{AB_0})$ is the marginal state of $A$. Now the key point of investigation is to consider the case where $C$ can coherently control the order of the channels $\map E$ and $\map F$. This setup, which we refer to as the {\tt {\tt SWITCH}} action (see Fig. \ref{fig:enter-label2}), also turns out to be of limited utility in its usual configuration. 
\begin{figure}[ht]
    \centering
    \includegraphics[width=\linewidth]{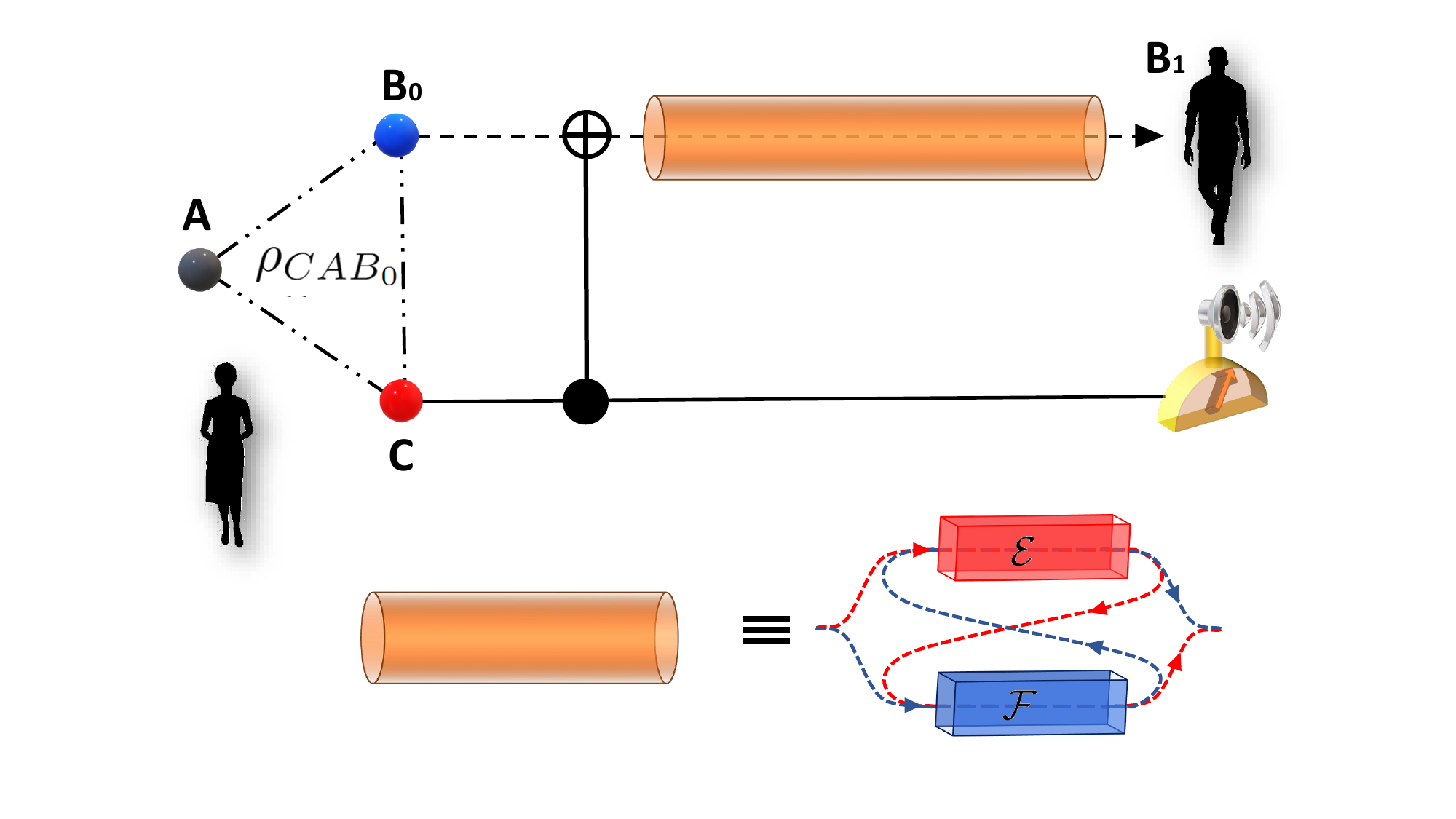}
    \caption{\textbf{Random Access Bell Game using the quantum {\tt SWITCH}}. The initial state $\rho_{CAB_0}$ is passed through the coherently controlled noise, e.g., the {\tt SWITCH} action, with C acting as the control. A measurement is then performed at C and the outcomes are communicated to Alice and Bob.}
    \label{fig:enter-label2}
\end{figure}

%\subsection{{\tt {\tt SWITCH}} Action}
The {\tt {\tt SWITCH}} action channel which results from the quantum {\tt {\tt SWITCH}} supermap acting on channels $\mathcal{E}$ and $\mathcal{F}$ is elaborated below. 
The Kraus operators of the noisy entanglement-breaking channels $\mathcal{E}$ and $\mathcal{F}$ are $\mathcal{E}_1 = \ket{0}\bra{0}$, $\mathcal{E}_2 = \ket{0}\bra{1}$ and $\mathcal{F}_1 = \ket{1}\bra{0}$, $\mathcal{F}_2 = \ket{1}\bra{1}$. The quantum {\tt {\tt SWITCH}} supermap takes these as inputs and outputs the channel $\text{{\tt {\tt SWITCH}}}(\mathcal{E}, \mathcal{F})$ with Kraus operators $\mathcal{S}_{ij} = \ket{0}\bra{0} \otimes \mathcal{E}_i \mathcal{F}_j + \ket{1}\bra{1} \otimes \mathcal{F}_j \mathcal{E}_i$, while Alice's subsystem $A$ remains unaffected. If we denote the combined channel over the three-party system as $\mathcal{K}$, we get the following Kraus operators:

\begin{eqnarray}
K_{0} = \mathcal{S}_{11} \otimes \mathbb{I}_A &=& \ket{1}\bra{1}_C \otimes \mathbb{I}_A\otimes \ket{1}\bra{0}_B, \notag\\ 
K_{1} = \mathcal{S}_{12} \otimes \mathbb{I}_A &=& 0, \notag\\
K_{2} = \mathcal{S}_{21} \otimes \mathbb{I}_A &=& \ket{1}\bra{1}_C \otimes \mathbb{I}_A \otimes \ket{1}\bra{1}_B \notag\\ &&+ \ket{0}\bra{0}_C \otimes \mathbb{I}_A \otimes \ket{0}\bra{0}_B, \notag\\ 
K_{3} = \mathcal{S}_{22} \otimes \mathbb{I}_A &=& \ket{0}\bra{0}_C \otimes \mathbb{I}_A \otimes \ket{0}\bra{1}_B.
\label{eqn:Switch_Kraus}
\end{eqnarray}
Any arbitrary three-party state $\rho_{CAB}$ under the {\tt {\tt SWITCH}} action gets updated as $\mathcal{K}(\rho_{CAB}) = \sum_{i=0}^3 K_i \rho_{CAB}K_i^\dagger$.

\begin{lemma}
    The quantum {\tt {\tt SWITCH}} with a canonical setup of employing a coherent state at the control is useless for the random access Bell game.
    \label{lemma:1}
\end{lemma}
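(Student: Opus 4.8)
The plan is to show that when the control is prepared in a fixed, unentangled pure state $\ket{c}_C=\alpha\ket0+\beta\ket1$ that is in product with $\rho_{AB_0}$, the {\tt SWITCH} channel can never produce an Alice--Bob state that violates a Bell inequality, so that $k_{\max}=0$. The first observation is that every Kraus operator in Eq.~\eqref{eqn:Switch_Kraus} factorizes as $\mathbb I_A$ tensored with an operator $\mathcal S_{ij}$ acting only on $CB$; hence $\map K=\mathbb I_A\otimes\Xi_{CB}$, where $\Xi_{CB}$ is the channel on $CB$ with Kraus operators $\{\mathcal S_{ij}\}$. Since the control enters only through the fixed state $\ket{c}\bra{c}_C$, I would absorb it into the map and define the induced channel $\Theta:B\to CB$ by $\Theta(\sigma)=\Xi_{CB}(\ket{c}\bra{c}_C\otimes\sigma)$, so that the post-{\tt SWITCH} tripartite state is exactly $\rho_{CAB}=(\mathbb I_A\otimes\Theta)(\rho_{AB_0})$. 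A short computation from the explicit $\mathcal S_{ij}$ shows that $\Theta$ sends every input into the correlated subspace $\mathrm{span}\{\ket{00},\ket{11}\}_{CB}$, fixing the populations to $(|\alpha|^2,|\beta|^2)$ and merely rescaling the $B$-coherence by $\alpha\bar\beta$.

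The crux of the argument is to show that this induced channel $\Theta:B\to CB$ is entanglement-breaking. I would verify this directly from its Choi operator $J(\Theta)=\sum_{k,l}\ket{k}\bra{l}_{B'}\otimes\Theta(\ket{k}\bra{l}_B)$ by checking positivity under partial transposition. Because $J(\Theta)$ is supported on $\mathcal H_{B'}\otimes\mathrm{span}\{\ket{00},\ket{11}\}_{CB}$ it is effectively a two-qubit operator, for which PPT is equivalent to separability; the partial transpose reduces to a single rank-deficient $2\times2$ block of vanishing determinant and is therefore positive, certifying that $\Theta$ is entanglement-breaking (equivalently, it admits a measure-and-prepare form). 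Consequently $\rho_{CAB}=(\mathbb I_A\otimes\Theta)(\rho_{AB_0})$ is separable across the bipartition $A\,|\,CB$ for \emph{every} initial $\rho_{AB_0}$.

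Finally I would translate $A\,|\,CB$ separability into the statement of the lemma. Writing $\rho_{CAB}=\sum_i p_i\,\sigma_A^i\otimes\tau_{CB}^i$, any measurement performed on the control $C$ followed by broadcasting its outcome is a local operation (with classical communication) on the $CB$ side of this cut; for each outcome $E$ the resulting Alice--Bob state is $\propto\sum_i p_i\,\sigma_A^i\otimes\Tr_C[(E\otimes\mathbb I_B)\tau_{CB}^i]$, a manifestly $A$--$B$ separable operator. The special case of simply discarding the control collapses $\rho_{CAB}$ to the product state $\rho_A\otimes(|\alpha|^2\ketbra{0}{0}+|\beta|^2\ketbra{1}{1})_B$. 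Since separable states admit a local hidden-variable model, the CHSH parameter never exceeds $2$ already after a single {\tt SWITCH} application, giving $k_{\max}=0$ and proving that the canonical coherent-control {\tt SWITCH} is useless. I expect the main obstacle to be the entanglement-breaking verification of $\Theta$, together with the care needed to rule out \emph{all} control-measurement-and-communication strategies simultaneously; the $A\,|\,CB$-separability viewpoint is precisely what makes the latter immediate, rather than having to analyze each conditional state for every $\rho_{AB_0}$ by hand.
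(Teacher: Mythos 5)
Your proof is correct, and it takes a genuinely different---and in fact stronger---route than the paper's. The paper argues at the level of \emph{states}: it fixes the control to $\ket{+}$ and the initial state to $\ket{\phi^+_{\alpha}}_{AB_0}$, applies the {\tt SWITCH}, conditions on a $\ket{\pm}$-basis measurement of $C$, and checks that the resulting (un-normalized) conditional states $\widetilde{\rho}^{\pm}_{AB_1}$ are PPT, hence separable for $2\otimes 2$ systems. You argue at the level of \emph{channels}: you note that $\map K=\mathbb{I}_A\otimes \Xi_{CB}$, absorb the fixed control state into an induced channel $\Theta: B\to CB$, and certify that $\Theta$ is entanglement-breaking from its Choi operator (your computation checks out: the output is confined to $\mathrm{span}\{\ket{00},\ket{11}\}_{CB}$ with populations $(\abs{\alpha}^2,\abs{\beta}^2)$ and coherence rescaled by $\alpha\bar{\beta}$, and the partial transpose of the Choi matrix has the $2\times 2$ block $\bigl(\begin{smallmatrix}\abs{\beta}^2 & \bar{\alpha}\beta\\ \alpha\bar{\beta} & \abs{\alpha}^2\end{smallmatrix}\bigr)$ with zero determinant and unit trace, hence is positive). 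This buys three things the paper's computation does not deliver literally: (i) an arbitrary coherent control $\alpha\ket{0}+\beta\ket{1}$ rather than only $\ket{+}$; (ii) an arbitrary initial $\rho_{AB_0}$ rather than a generalized Bell state, despite the paper's claim of treating "the most general configuration"; and (iii) arbitrary POVMs on $C$ with classical communication rather than only the $\ket{\pm}$ measurement, since $A\,|\,CB$ separability makes every conditional $A$--$B$ state separable simultaneously. As a further bonus, because subsequent {\tt SWITCH} applications act only on the $CB$ side of that cut, your separability statement persists for all later access nodes, not just the first. The paper's approach is the shorter concrete calculation; yours is the argument one actually needs to justify the word "useless" against all strategies in the game.
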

\begin{proof}
We start with the most general configuration of the initial state, with a coherent control 
\begin{eqnarray}
    \rho_{CAB_0} = \ket{+}_C\otimes \ket{\phi^+_{\alpha}}_{AB_0},
\end{eqnarray}
where $\ket{\phi^+_{\alpha}}= \sqrt{\alpha}\ket{00}+ \sqrt{1-\alpha}\ket{11}.$
The output state after the {\tt {\tt SWITCH}} action is given by 
    \begin{eqnarray}
        \rho_{CAB_1} = \frac12 \ketbra{\text{GHZ}_\alpha}{\text{GHZ}_\alpha} &+& \frac{1-\alpha}{2}\ketbra{010}{010} \nonumber \\ &+& \frac{\alpha}{2} \ketbra{101}{101}), \nonumber
    \end{eqnarray}
    where $\ket{\text{GHZ}_\alpha} = \sqrt{\alpha}\ket{000}+ \sqrt{1-\alpha}\ket{111}).$
    After the control $C$ is measured in the $\ket \pm$ basis, the un-normalized output states are computed to be
 \begin{eqnarray}
        \widetilde{\rho}^{\pm}_{AB_1} = \frac14 \big(\ket{\phi^+_{\alpha}}\bra{\phi^+_{\alpha}} +(1-\alpha)\ketbra{10}{10}+ \alpha \ketbra{01}{01}\big).
    \end{eqnarray}
   Now, we show that  $\widetilde{\rho}^{\pm}_{AB_1}$ is separable. Since $\widetilde{\rho}^{\pm}_{AB_1}$ we use the Postive Partial Transpose (PPT) criterion \cite{ppt} which is necessary and sufficient for $2\otimes 2$ and $2\otimes 3$ systems. Now, it is straightforward to show
   \begin{eqnarray}
(\widetilde{\rho}^{\pm}_{AB_1})^{T_A} \geq 0 ~~\forall \alpha,
   \end{eqnarray}
   since the eigenvalues of $ (\widetilde{\rho}^{\pm}_{AB_1} )^{T_A}$ are $\{1,0,\alpha,1-\alpha\}$.
This, in turn, guarantees the separability of $\widetilde{\rho}^{\pm}_{AB_1}$. Hence the proof.
\iffalse
    We provide a sketch of the proof here with an example. Suppose the initial state is $\ket{+}_C\otimes \ket{\phi^+}_{AB_0}$ ~~$\ket{+}_C\otimes \ket{\phi^+_{\alpha}}_{AB_0}$. The output state after the {\tt {\tt SWITCH}} action is given by 
    \begin{eqnarray}
        \rho_{CAB_1} = \frac12 \ketbra{\text{GHZ}}{\text{GHZ}} + \frac{1}{4}(\ketbra{010}{010} + \ketbra{101}{101}), \nonumber
    \end{eqnarray}
    where $\ket{\text{GHZ}} = \frac{1}{\sqrt{2}}(\ket{000}+\ket{111}).$
    After the control $C$ is measured in the $\ket \pm$ basis, the un-normalized output states are computed to be
    \begin{eqnarray}
        \widetilde{\rho}^{\pm}_{AB_1} = \frac18 (\ketbra{\Phi^\pm}{\Phi^\pm}+\ketbra{\Psi^+}{\Psi^+}+\ketbra{\Psi^-}{\Psi^-}).
    \end{eqnarray}
    Both $ \widetilde{\rho}^{\pm}_{AB_1}$ are separable Bell diagonal states, and therefore cannot violate a Bell inequality. The general proof (in Appendix X) \ga{add the general proof} involves an arbitrary state $\rho_{AB_0}$ instead of $\ket{\Phi^+}_{AB_0}$ as the initial state. 
    \fi
\end{proof}

\begin{lemma}
    A GHZ state shared between $CAB_0$  can enable maximal violation of the CHSH inequality when the particle is deterministically/conclusively picked up from any access node $k$ by $B_k$.
\label{lemma:lemma2}
\end{lemma}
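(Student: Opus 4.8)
The plan is to exploit the decoherence-free subspace identified in the introduction. The (balanced) GHZ state $\ket{\text{GHZ}} = \tfrac{1}{\sqrt2}(\ket{000}+\ket{111})$, written in the $CAB$ ordering, lies entirely inside the subspace where the control--Bob factors are supported on $\Span\{\ket{00}_{CB},\ket{11}_{CB}\}$. My first step is to show that this state is a \emph{fixed point} of the {\tt SWITCH} channel $\map K$, so that $\rho_{CAB_k}=\ketbra{\text{GHZ}}{\text{GHZ}}$ for every round $k$. Once this is established, the particle extracted at any node carries exactly the same tripartite state, and the only remaining task is to convert it into a Bell-violating bipartite state by a single measurement on $C$.

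For the fixed-point step I would read off the action of the Kraus operators in Eq.~\eqref{eqn:Switch_Kraus} directly on the two computational strings $\ket{000}$ and $\ket{111}$. A short check gives $K_1=0$, while $K_0$ (resp.\ $K_3$) acts nontrivially only on strings whose control and Bob bits \emph{disagree}, namely $C{=}1,B{=}0$ (resp.\ $C{=}0,B{=}1$). Since the GHZ strings always have \emph{agreeing} control and Bob bits, $K_0$ and $K_3$ annihilate them. By contrast $K_2$ projects onto the agreeing sector $\{\ket{00}_{CB},\ket{11}_{CB}\}$ and acts as the identity there, so $K_2\ket{\text{GHZ}}=\ket{\text{GHZ}}$. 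Hence $\map K(\ketbra{\text{GHZ}}{\text{GHZ}})=K_2\ketbra{\text{GHZ}}{\text{GHZ}}K_2^\dagger=\ketbra{\text{GHZ}}{\text{GHZ}}$, and a trivial induction over the rounds yields $\rho_{CAB_k}=\ketbra{\text{GHZ}}{\text{GHZ}}$ for all $k$.

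Next I would measure the control $C$ projectively (sharply, hence conclusively) in the $\ket{\pm}$ basis at the chosen node. Expanding the GHZ state gives $\ket{\text{GHZ}}=\tfrac{1}{\sqrt2}\ket{+}_C\ket{\phi^+}_{AB}+\tfrac{1}{\sqrt2}\ket{-}_C\ket{\phi^-}_{AB}$ with $\ket{\phi^\pm}=\tfrac{1}{\sqrt2}(\ket{00}\pm\ket{11})$, so each outcome occurs with probability $\tfrac12$ and steers Alice and Bob onto a maximally entangled Bell state. Both $\ket{\phi^+}$ and $\ket{\phi^-}$ saturate the Tsirelson bound $\mathcal B=2\sqrt2$ for CHSH; broadcasting the control outcome lets Alice and Bob select the appropriate measurement settings (equivalently, apply the local $\sigma_z$ correction relating $\phi^+$ and $\phi^-$). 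Since the argument is independent of $k$, maximal violation is available at whichever single node is named. I remark that the balanced GHZ is what makes this \emph{maximal}: a generalized GHZ with weight $\alpha$ would leave the conditional states partially entangled, yielding only $2\sqrt{1+4\alpha(1-\alpha)}$.

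I do not anticipate a genuine technical obstacle, as the whole argument reduces to the fixed-point computation together with a standard Bell-state fact. The main subtlety -- and the point the lemma is really flagging -- is the word \emph{deterministically/conclusively}: a sharp measurement on $C$ is destructive, collapsing the control and severing its entanglement with $AB$, so it can be performed at only one node. Thus the lemma guarantees Tsirelson-bound violation at any \emph{single} pre-committed access node, but not simultaneously across many nodes. Removing precisely this restriction is what the unsharp-measurement protocol of Sec.~\ref{sec:protocol_analysis} is designed to achieve, so the care here is mainly in stating exactly what ``any access node'' means under a destructive measurement.
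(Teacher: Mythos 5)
Your proof is correct and takes essentially the same route as the paper's: invariance of the $\{\ket{000},\ket{111}\}$ subspace under the {\tt SWITCH} Kraus operators of Eq.~\eqref{eqn:Switch_Kraus}, followed by a projective $\ket{\pm}$-basis measurement on $C$ that steers $AB$ onto a maximally entangled Bell state (the paper phrases this as ``up to local unitaries, the state shared between $A$ and $B_k$ is $\ket{\Phi^+}$''). Your write-up simply makes explicit the Kraus-operator check and the $\ket{\pm}$ expansion that the paper declares ``evident,'' so no substantive difference exists between the two arguments.
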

\begin{proof}
    It is evident from Eq. \eqref{eqn:Switch_Kraus} that the {\tt {\tt SWITCH}} action preserves any state in the subspace of $\{\ket{000}, \ket{111}\}$. %because while the $\mathcal{K}_1$ and $\mathcal{K}_3$ operators annihilate such states, the $\mathcal{K}_2$ operator leaves them unchanged. Thus, 
    Therefore, initially starting with a GHZ state, $\rho_{CAB_0} =\ket{\text{GHZ}}\bra{\text{GHZ}}$ from Eq. \eqref{eqn:Switch_Kraus}, after any arbitrary but pre-declared round $k$, we get
\begin{eqnarray}
\rho_{CAB_k} = \mathcal{K}^{\otimes k}(\rho_{CAB_0}) =\ket{\text{GHZ}}\bra{\text{GHZ}}.
\label{eqn:rhocabk}
\end{eqnarray}
After $k$ iterations, the control $C$ measures its part of the system in the $\ket{\pm}$ basis. Then, up to local unitaries, the state shared between $A$ and $B_k$ is $\ket{\Phi^+}$ that violates the CHSH inequality maximally.
\end{proof}

Things become different when the particle is picked up from a random node. Then, the quantity of interest becomes $k_{\max} =:$ the maximal round up to which, if the particle is extracted, demonstrates violation of the CHSH inequality for all rounds till $k_{\max}$.

\begin{observation}
    The random access Bell game \textbf{fails} for any strategy involving projective measurements at $C$ at any round $m$ other than the randomly selected round.
\end{observation}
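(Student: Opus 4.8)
The plan is to prove that a projective measurement of the control $C$ at a single fixed round $m$ can win only when the referee's randomly chosen access node $k$ equals $m$, and that it necessarily fails for every $k\neq m$. I would organize the argument around the fact that the CHSH test is carried out on the reduced pair $\rho_{AB_k}=\Tr_C(\rho_{CAB_k})$, and that this marginal is nonlocal only after $C$ has been measured and its $\ket{\pm}$ outcome used to project the preserved GHZ state into a Bell pair, exactly as in Lemma~\ref{lemma:lemma2}. Before any control measurement, Eq.~\eqref{eqn:rhocabk} gives $\rho_{CAB_k}=\ketbra{\text{GHZ}}{\text{GHZ}}$, so $\rho_{AB_k}=\tfrac12(\ketbra{00}{00}+\ketbra{11}{11})$ is separable and cannot violate CHSH. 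This single fact drives both cases.

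For $k<m$ the argument is immediate: if the referee extracts at a node strictly earlier than the scheduled measurement round $m$, then at the moment of extraction $C$ has not been touched, the marginal $\rho_{AB_k}$ is the separable classically-correlated state above, and no violation is possible.

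The case $k>m$ is the substantive one. A projective measurement of $C$ in the $\ket{\pm}$ basis on $\ketbra{\text{GHZ}}{\text{GHZ}}$ collapses the control into a product coherent state and leaves $\ket{+}_C\otimes\ket{\Phi^+}_{AB_m}$ or $\ket{-}_C\otimes\ket{\Phi^-}_{AB_m}$. For the $+$ outcome this is precisely the initial configuration of Lemma~\ref{lemma:1} with $\alpha=\tfrac12$. I would then use the explicit Kraus operators of Eq.~\eqref{eqn:Switch_Kraus} to check that the Lemma~\ref{lemma:1} output $\rho_{CAB_{m+1}}$ is a fixed point of the \texttt{SWITCH} channel $\mathcal{K}$: the GHZ component lives in the protected $\{\ket{000},\ket{111}\}$ subspace, and a direct check shows the residual terms $\ketbra{010}{010}$ and $\ketbra{101}{101}$ are each stabilized by $\mathcal{K}$. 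Consequently the tripartite state is frozen for all $k>m$, and measuring $C$ at any such node returns the separable $AB$ state guaranteed by Lemma~\ref{lemma:1}, so CHSH is never violated. The $-$ outcome follows by the same computation, or more compactly from the fact that conjugation by $Z_C$ fixes every Kraus operator in Eq.~\eqref{eqn:Switch_Kraus}, mapping the two outcomes onto one another.

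Combining the two cases, a strategy that commits to a projective control measurement at round $m$ produces a CHSH violation only at the single node $k=m$; for a referee sampling $k$ at random it therefore cannot guarantee violation across all accessible nodes. The main obstacle is the $k>m$ analysis, where one must rule out any regeneration of nonlocality over arbitrarily many further \texttt{SWITCH} rounds; I expect the fixed-point property of $\rho_{CAB_{m+1}}$ to be the decisive simplification, collapsing that worry to a single invocation of Lemma~\ref{lemma:1}. This all-or-nothing behaviour of projective measurements is exactly what motivates the unsharp-measurement protocol developed next.
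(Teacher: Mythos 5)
Your proof is correct and follows essentially the same route as the paper: a projective measurement on $C$ collapses the state to a product across the $C:AB_m$ cut with a coherent control, which is exactly the canonical setup shown to be useless in Lemma~\ref{lemma:1}. Your additions — the explicit $k<m$ case and the check that the Lemma~\ref{lemma:1} output state is a fixed point of $\mathcal{K}$, so that failure persists through all later rounds — simply make rigorous what the paper's two-line proof leaves implicit.
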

\begin{proof}
    Projective measurements at the control $C$ at any round $m$ result in a product state across the $C:AB_m$ cut. Following Lemma \ref{lemma:1}, we conclude the absence of CHSH violation for all subsequent rounds.
\end{proof}

The objective of the paper is to construct a protocol that enables success in the random access Bell game, by demonstrating a finite number of rounds $(k >1)$ for which CHSH violation persists if the particle is picked from any access node $m \leq k$.
%Although the above scheme performs well when the round $k$ at which the particle is picked up to check Bell violations, since the target and control become unentangled, specifically a product state, from Lemma. 1 it follows that if the particle was picked from any other access node there would be no violation. Now, a crucial question surfaces. What if the referee prescribes to pick up the particle from an access node $k >1$. In other words, 
Specifically, we want to investigate  \emph{what is the furthest access node $k_{\max}$ such that if the particle was picked from any access node $k \leq k_{\max}$, there is a guaranteed violation of the CHSH inequality between $A$ and $B_k$.}

\section{Advantages in the random access Bell Game using the quantum {\tt {\tt SWITCH}}}
\label{sec:protocol_analysis}
%\noindent \textcolor{red}{change all $max$ to $\max$.}\\
%\textcolor{red}{change $\ket{\Phi^+_\alpha} \to \ket{\Phi^+_\alpha}$, and $a \to \alpha$} \ga{done!}\\

We now present a protocol that uses the quantum {\tt {\tt SWITCH}} and sequential unsharp measurements of the control to enable success in the random access Bell game. Mathematically, success implies that we can achieve $k_{\max} > 1$.
\begin{protocol}\emph{(Sequential strategy.)}
  Alice and Bob initially share a tripartite entangled state, $\rho_{CAB_0}$. The protocol, which guarantees CHSH violation for all rounds $k \leq k_{\max},$ proceeds in the following steps. 
    \begin{enumerate}
    \item The combined three qubit state undergoes the coherently controlled noise, i.e., the {\tt {\tt SWITCH}} action, with subsystem C set as the control. 
        \item  After the {\tt {\tt SWITCH}} action, we obtain the state $\rho_{CAB_1}$. If $k_{\max}>1$, Alice performs   an unsharp measurement on the control system $C$ described by the POVMs $\{E_{\pm}^{\lambda_k}\}$ to yield one of the post-measurement states $\rho_{CAB_1}^{E_\pm}$, where
\begin{eqnarray}
            E_{\pm}^{\lambda_k} &=& \lambda_k \ketbra{\pm}{\pm} + \frac{1-\lambda_k}{2} \mathbb I \nonumber \\
            &=& \frac{1+\lambda_k}{2} \ketbra{\pm}{\pm} + \frac{1-\lambda_k}{2}\ketbra{\mp}{\mp},
            \label{eqn:unsharp_measure}
        \end{eqnarray}
      and where $\lambda_k$ is the sharpness parameter in round $k$. The sharpness parameter at any given round $k$ is chosen to ensure Bell violation, i.e.,  $|\map B(\rho_{AB_k})| \geq 2+ \delta$ for some positive number $\delta \in (0,2\sqrt{2}-2]$.

 If $k_{\max} =1$, $C$ is measured projectively in the $\{\ket{+},\ket{-}\}$ basis to maximize $\map B(\rho_{AB_{k_{\max}}})$. 

 \item Steps 1. and 2. are repeated. After $k$ rounds, states $\rho_{CAB_k}$  (post-{\tt {\tt SWITCH}} action) and $\rho_{CAB_k}^{E_\pm}$ (post-measurement)  are obtained. This continues until the final round $k_{\max}th$, at which Alice performs a projective measurement on $C$ and communicates the measurement outcome to Bob. 
    \end{enumerate}
\end{protocol}

\begin{comment}
$\rho_{AB_0}= \ketbra{\phi^+}{\phi^+}$, where $\ket{\phi^+} = \frac{1}{\sqrt{2}}(\ket{00}+\ket{11})$.  The protocol proceeds in the following steps. 
    \begin{enumerate}
        \item Alice takes a ancillary particle $C$ in the $\ket{+}$ state and performs a {\tt CNOT} operation using the qubit in her possession. $C$ is used as the control of the quantum {\tt {\tt SWITCH}}. In effect, now $C$, $A$ and $B_0$ share a GHZ-state $\rho_{CAB_0} = \ketbra{\text{GHZ}}{\text{GHZ}}, ~\emph{where} ~\ket{\text{GHZ}}=\frac{1}{\sqrt{2}}(\ket{000}+\ket{111})$.

        \item The system B in possesion of Bob, undergoes the coherently controlled noise, e.g. the {\tt {\tt SWITCH}} action with control state C
     
        \item After the {\tt {\tt SWITCH}} action, we end up with the state $\rho_{CAB_1}$. Now two cases arise. If $k_{max} = 1$, Alice performs a projective measurement $\{\ket+,\ket-\}$ on her $C$ subsystem and communicates the measurement outcome to $B_1$. If $k_{max}\neq1$, then she performs the unsharp measurement on $C$ as described in Eq.~\eqref{eqn:unsharp_measure} leading to either of the post measurement states $\rho_{CAB_k}^{E\pm}$.

        \item Steps 2. and 3. are repeated. After $k$ rounds of the protocol states $\rho_{CAB_k}$ and $\rho_{CAB_k}^{E_\pm}$  are obtained. This is done until the $k_{max}th$ step, at which Alice performs a projective measurement on $C$ and communicates the measurement outcome to $B_{k_{max}}$. 
    \end{enumerate}

\end{comment}
In the subsequent section, we implement the protocol and evaluate the success metrics in the ``Random Access Bell Game" when $\rho_{CAB_0}$ is a generalized GHZ state.

\subsection{Persistent violation for arbitrarily many rounds}

Let us establish an instance of the above protocol where we start with a generalized GHZ state $\rho_{CAB_0} = \ketbra{\text{GHZ}_\alpha}{\text{GHZ}_\alpha}$, where $\ket{\text{GHZ}_\alpha}=\sqrt{\alpha}\ket{000}+\sqrt{1-\alpha}\ket{111}$.

\begin{comment}
    \begin{enumerate}
        \item Alice initializes the ancillary particle $C$ in the state $\ket{0}$ and performs a {\tt CNOT} operation on $C$ using the qubit in her possession as the control. In effect, $C$, $A$, and $B_0$ now share a generalized GHZ state, $\rho_{CAB_0} = \ketbra{\text{GHZ}_\alpha}{\text{GHZ}_\alpha}$, where $\ket{\text{GHZ}_\alpha}=\sqrt{\alpha}\ket{000}+\sqrt{1-\alpha}\ket{111}$.

         \item The system $B$, in Bob's possession, undergoes the coherently controlled noise, i.e., the {\tt {\tt SWITCH}} action with control system $C$.
     
        \item After the {\tt {\tt SWITCH}} action, we obtain the state $\rho_{CAB_1}$. Two cases arise. If $k_{\max} = 1$, Alice performs a projective measurement on her $C$ subsystem in the $\{\ket{+},\ket{-}\}$ basis and communicates the outcome to Bob. If $k_{\max} \neq 1$, she performs an unsharp measurement on $C$ as described in Eq.~\eqref{eqn:unsharp_measure}, leading to one of the post-measurement states $\rho_{CAB_1}^{E_\pm}$.

        \item Steps 2. and 3. are repeated. After $k$ rounds, states $\rho_{CAB_k}$  (post-{\tt {\tt SWITCH}} action) and $\rho_{CAB_k}^{E_\pm}$ (post-measurement)  are obtained. This continues until the final round $k_{\max}th$, at which Alice performs a projective measurement on $C$ and communicates the measurement outcome to Bob. 
    \end{enumerate}

With our protocol established, let us evaluate the outcomes of the first few rounds.
\end{comment}

\subsubsection{Round 1}

\textit{{\tt SWITCH} Action.} Since the generalized GHZ state lies in the protected subspace of $\{\ket{000}, \ket{111}\}$ from Eq. \eqref{eqn:Switch_Kraus}, {\tt SWITCH} action on the state $\rho_{CAB_0} = \ketbra{GHZ_\alpha}{GHZ_\alpha}$ leaves the state unchanged
\begin{equation}
\rho_{CAB_1} = \mathcal{K}(\rho_{CAB_0}) =\ketbra{GHZ_\alpha}{GHZ_\alpha}. 
\label{eqn:rhocab1}
\end{equation}

\textit{Unsharp Measurement.} To evaluate the post-measurement states of the unsharp measurement, the POVM elements are expressed in terms of measurement operators $E^{\lambda_k}_\pm = \sqrt{E^{\lambda_k}_\pm }^\dagger \sqrt{E^{\lambda_k}_\pm }$, with
\begin{eqnarray}
\sqrt{E_\pm^{\lambda_k}} = \sqrt{\frac{1\pm \lambda_k}{2}}\ket{+}\bra{+} + \sqrt{\frac{1\mp \lambda_k}{2}}\ket{-}\bra{-}.
\end{eqnarray}
With the GHZ state obtained in the first round, $\rho_{CAB_1} = \ket{\text{GHZ}_\alpha}\bra{\text{GHZ}_\alpha}$, a post-measurement state $\rho_{CAB_1}^{E_\pm}$ for the combined system, depending on the measurement outcome, takes the form
\begin{eqnarray}
&&\rho_{CAB_1}^{E_\pm} = \frac{\left(\sqrt{E^{\lambda_1}_\pm }\otimes \mathbb{I}_{AB}\right)\rho_{CAB_1}\left(\sqrt{E^{\lambda_1}_\pm }\otimes \mathbb{I}_{AB}\right)^\dagger}{\Tr[\rho_{CAB_1}(E^{\lambda_1}_\pm \otimes \mathbb{I}_{AB})]} \notag\\
&&=\frac{1\pm \lambda_1}{2}\ket{+\Phi^+_\alpha}\bra{+\Phi^+_\alpha} + \frac{1\mp \lambda_1}{2}\ket{-\Phi^-_\alpha}\bra{-\Phi^-_\alpha} \notag\\
&&+ \frac{\sqrt{1-\lambda_1^2}}{2}\left(\ket{+\Phi^+_\alpha}\bra{-\Phi^-_\alpha}+\ket{-\Phi^-_\alpha}\bra{+\Phi^+_\alpha}\right),
\label{eqn:rhoEcab1}
\end{eqnarray}

where the generalized Bell states are defined as
\begin{eqnarray}
\ket{\Phi^\pm_\alpha} &=& \sqrt{\alpha}\ket{00}\pm\sqrt{1-\alpha}\ket{11},\notag\\
\ket{\Psi^\pm_\alpha} &=& \sqrt{\alpha}\ket{01}\pm\sqrt{1-\alpha}\ket{10},
\end{eqnarray}
and the sign $\pm$ depends on the measurement outcome. We also use the standard shorthand $\ket{\Phi^+} :=\ket{\Phi^+_{\alpha= \frac12}}$. 
 The obtained state is a probabilistic mixture of two generalized Bell states with maximum mixing at $\lambda_1 = 0$, which decreases as $\lambda_1$ grows.

\textit{Bell Violation.} We can now evaluate the Bell violation on the A:B cut. Tracing out subsystem $C$ from the post-measurement state $\rho_{CAB_1}^{E_\pm}$ yields
\begin{eqnarray}
\rho_{AB_1}^{E_\pm}  = \frac{1\pm\lambda_1}{2}\ket{\Phi^+_\alpha}\bra{{\Phi^+_\alpha}} +\frac{1\mp\lambda_1}{2}\ket{\Phi^-_\alpha}\bra{{\Phi^-_\alpha}}.
\label{eqn:rhoEcab1} \notag
\end{eqnarray}
Horodecki's criterion is particularly straightforward for generalized Bell states because of the simple form of the $T_\rho$ matrix for these states and the fact that its components are linear in $\rho$. From the $T_\rho$ matrices for such states, 
\begin{eqnarray}
T_{\Phi^+_\alpha} &=& \mathrm{\text{Diag}}\left(2\sqrt{\alpha(1-\alpha)},-2\sqrt{\alpha(1-\alpha)},1\right), \notag\\ T_{\Phi^-_\alpha} &=& \mathrm{\text{Diag}}\left(-2\sqrt{\alpha(1-\alpha)},2\sqrt{\alpha(1-\alpha)},1\right),\notag\\ T_{\Psi^+_\alpha} &=& \mathrm{\text{Diag}}\left(2\sqrt{\alpha(1-\alpha)},2\sqrt{\alpha(1-\alpha)},-1\right), \notag\\ T_{\Psi^-_\alpha} &=& \mathrm{\text{Diag}}\left(-2\sqrt{\alpha(1-\alpha)},-2\sqrt{\alpha(1-\alpha)},-1\right), \notag
\end{eqnarray}
we get
$$T_{\rho_{AB_1}^{E_\pm}} = \mathrm{\text{Diag}}\left(\pm2\sqrt{\alpha(1-\alpha)}\lambda_1, \mp2\sqrt{\alpha(1-\alpha)}\lambda_1, 1\right),$$ 
thus yielding an $M(\rho)$ value of $M(\rho) = 1+4\alpha(1-\alpha)\lambda_1^2$ and, therefore, a Bell violation in the first round of $|\mathcal{B}_1|= 2\sqrt{1+ 4\alpha(1-\alpha)\lambda_1^2}$, independent of which post-measurement outcome is obtained.  As expected, the maximum violation occurs with GHZ and Bell states ($\alpha = 1/2$) and decays as $\alpha$ approaches 0 or 1. We conclude that a Bell violation will occur in the first round for all $\alpha \in (0,1)$ and unsharp measurements with $\lambda_1 >0$.\\

\subsubsection{Round 2}

\textit{{\tt SWITCH} Action.} Starting with the post-measurement states $\rho_{CAB_1}^{E_\pm}$ from round one, the second round of the protocol begins with an application of the {\tt {\tt SWITCH}} action, which results in the state
\begin{eqnarray}
\rho_{CAB_2} &=& \sum_{i = 1}^3 \mathcal{K}_i\rho_{CAB_1}^{E_\pm} \mathcal{K}_i^\dagger\notag\\
&&  \hspace{-1cm}= \frac{1+\sqrt{1-\lambda_1^2}}{2}\ket{\text{GHZ}_\alpha}\bra{\text{GHZ}_\alpha} \notag\\ 
&& \hspace{-1cm}+ \frac{1-\sqrt{1-\lambda_1^2}}{4}\bigg(2(1-\alpha)\ket{010}\bra{010} + 2\alpha\ket{101}\bra{101}\bigg). \notag
\label{eqn:rhocab2}
\end{eqnarray}
Interestingly, the resulting state is independent of the first round's measurement outcome, as the {\tt SWITCH} action maps both possible post-measurement states $\rho_{CAB_1}^{E+}$ and $\rho_{CAB_1}^{E-}$ to the same output state. As such, we do not have to keep track of the particular post-measurement outcomes. The result of performing unsharp measurement in the first round is now apparent in the form of states $\ketbra{010}{010}$ and $\ketbra{101}{101}.$ Had we not performed the measurement (corresponding to the trivial case $\lambda_1  =0$), the state would have remained a generalized GHZ state. Tracing out system C from this would result in a separable state for A and B, yielding no Bell violation.\\ 

\textit{Unsharp Measurement.} After the {\tt {\tt SWITCH}} action, an unsharp measurement is performed on the control qubits C. Utilizing POVMs $E_\pm^{\lambda_2}$ and measurement operators $\sqrt{E_\pm^{\lambda_2}}$ leads to a post-measurement state
\begin{align}
\rho_{CAB_2}^{E_\pm}
&= \Big[\tfrac{1+\sqrt{1-\lambda_1^2}}{2}\Big]
   \Bigg(
     \tfrac{1\pm\lambda_2}{2}
       \ket{+\Phi^+_\alpha}\bra{+\Phi^+_\alpha} \notag\\
&\hspace{3cm}
   + \tfrac{1\mp\lambda_2}{2}
       \ket{-\Phi^-_\alpha}\bra{-\Phi^-_\alpha} \notag\\
&\hspace{3cm}
   + \tfrac{\sqrt{1-\lambda_2^2}}{2}
       \ket{+\Phi^+_\alpha}\bra{-\Phi^-_\alpha} \notag\\
&\hspace{3cm}
   + \tfrac{\sqrt{1-\lambda_2^2}}{2}
       \ket{-\Phi^-_\alpha}\bra{+\Phi^+_\alpha}
   \Bigg) \notag\\[4pt]
&\quad + \Big[\tfrac{1-\sqrt{1-\lambda_1^2}}{4}\Big]
   \Bigg(
     \tfrac{1\pm\lambda_2}{2}
       \ket{+\Psi^+_\alpha}\bra{+\Psi^+_\alpha} \notag\\
&\hspace{3cm}
   + \tfrac{1\pm\lambda_2}{2}
       \ket{+\Psi^-_\alpha}\bra{+\Psi^-_\alpha} \notag\\
&\hspace{3cm}
   + \tfrac{1\mp\lambda_2}{2}
       \ket{-\Psi^+_\alpha}\bra{-\Psi^+_\alpha} \notag\\
&\hspace{3cm}
   + \tfrac{1\mp\lambda_2}{2}
       \ket{-\Psi^-_\alpha}\bra{-\Psi^-_\alpha}
   \Bigg) \notag\\[4pt]
&\quad + \Big[\tfrac{1-\sqrt{1-\lambda_1^2}}{4}\tfrac{\sqrt{1-\lambda_2^2}}{2}\Big]
   \Bigg(
     (1-\alpha)\ket{010}\bra{010} \notag\\
&\hspace{1cm}
   + \alpha\ket{101}\bra{101}
   - (1-\alpha)\ket{110}\bra{110} \notag\\
&\hspace{2cm}
   - \alpha\ket{001}\bra{001}
   \Bigg),
\label{eqn:rhoEcab2}
\end{align} where the sign $\pm$ depends on the particular post-measurement outcome, and $\ket{\Phi^\pm_\alpha}, \ket{\Psi^\pm_\alpha}$ are the appropriate generalized Bell states.

\textit{ Bell Violation.} Tracing out the system C to access the Bell violation on the A:B cut, we get 

\begin{eqnarray}
\rho_{AB_2}^{E_\pm} &=&\Bigg[\frac{1+\sqrt{1-\lambda_1^2}}{2}\Bigg]\Bigg(\frac{1\pm \lambda_2}{2}\ket{\Phi^+_\alpha}\bra{\Phi^+_\alpha} \notag\\ && \hspace{3cm}+ \frac{1\mp \lambda_2}{2}\ket{\Phi^-_\alpha}\bra{\Phi^-_\alpha} \Bigg) \notag\\
&+&\Bigg[\frac{1-\sqrt{1-\lambda_1^2}}{4}\Bigg]\Bigg(\ket{\Psi^+_\alpha}\bra{\Psi^+_\alpha} + \ket{\Psi^-_\alpha}\bra{\Psi^-_\alpha}\Bigg). \notag\\
\end{eqnarray}
Since the state is in the Bell basis, the $T_{\rho_{AB_2}^{E_\pm}}$ matrix is easily calculated: 
\begin{eqnarray}
T_{\rho_{AB_2}^{E_\pm}} &&= \mathrm{\text{Diag}}\left(\pm\sqrt{\alpha(1-\alpha)}\left[1+\sqrt{1-\lambda_1^2}\right]\lambda_2, \right. \notag\\ 
&& \left. \mp\sqrt{\alpha(1-\alpha)}\left[1+\sqrt{1-\lambda_1^2}\right]\lambda_2, \sqrt{1-\lambda_1^2}\right),
\end{eqnarray}
that leads to a Bell violation of 
\begin{eqnarray}
|\mathcal{B}_2| = \begin{cases} 2\sqrt{2\alpha(1-\alpha)}\left[1+\sqrt{1-\lambda_1^2}\right]\lambda_2,\\  \hspace{0.2cm} \text{if }  \sqrt{\alpha(1-\alpha)}\left[1+\sqrt{1-\lambda_1^2}\right]\lambda_2 \geq \sqrt{1-\lambda_1^2}, \\\\
2\sqrt{\alpha(1-\alpha)\left[1+\sqrt{1-\lambda_1^2}\right]^2\lambda_2^2 +1-\lambda_1^2}, \\ \hspace{0.2cm} \text{if }  \sqrt{\alpha(1-\alpha)}\left[1+\sqrt{1-\lambda_1^2}\right]\lambda_2 \leq \sqrt{1-\lambda_1^2}.
\end{cases}
\label{eqn:Bellviol_round2}
\end{eqnarray}

As in the first round, the violation is independent of which of the post-measurement outcomes was obtained. The violation decreases as $\alpha$ approaches 0 or 1, with the maximum obtained for the GHZ state ($\alpha = 1/2$). Note that setting $\alpha = 1/2$, $\lambda_2 = 1$, and $\lambda_1 = 0$ yields a maximal Bell violation of $2\sqrt{2}$ in the second round, but this would give no violation in the first round. There is a certain trade-off in the allowed violation since $|\mathcal{B}_1|$ increases with $\lambda_1$, while $|\mathcal{B}_2|$ decreases with $\lambda_1$ and increases with $\lambda_2$. Indeed, if we set $\lambda_1 = 1$ (a projective measurement), a maximum violation occurs in round 1, but no entanglement is preserved for subsequent rounds. As we decrease $\lambda_1$, the measurement becomes unsharp, leading to less violation in the first round and more violation in the second round. At $\lambda_1 = 0,$ the measurement is maximally unsharp $E_+^{\lambda_1} = E_-^{\lambda_1} = \mathbb{I}/2$, yielding no information and thus no violation in the first round, while enabling a maximal violation in the second round. This creates a trade-off between the possible violations between the two rounds, more violation in one leads to lesser violation in the other.

This behavior can be exploited. We can set a small value $\lambda_1 = \epsilon > 0$ arbitrarily close to zero and the maximum value $\lambda_2 = 1$. This ensures a finite violation in round one, $|\mathcal{B}_1| = 2\sqrt{1+4\alpha(1-\alpha)\epsilon^2} \approx 2+4\alpha(1-\alpha)\epsilon^2>2$, and a substantial violation in round two, $|\mathcal{B}_2| = 2\sqrt{\alpha(1-\alpha)[1+\sqrt{1-\epsilon^2}]^2 +1-\epsilon^2} \approx 2\sqrt{1+4\alpha(1-\alpha)} - \frac{1+2\alpha(1-\alpha)}{\sqrt{1+4\alpha(1-\alpha)}}\epsilon^2 >2$. For $\alpha=1/2$, the violation is near-maximal: $|\mathcal{B}_2| \approx 2\sqrt{2} - \frac{3}{2\sqrt{2}}\epsilon^2$. We conclude that unsharp measurements allow for near-maximal violations in round two while retaining at least some violation in round one; hence $k_{\max}\geq 2$. Now that we have proved the success of the random access Bell game by showing that $k_{\max}$ is indeed lower-bounded by two, we confront two important questions: 1) What is the value of $k_{\max}$? 2) What is the maximum possible violation in round $k$ for $k<k_{\max}$? In what follows, we use the patterns observed so far to answer these questions. 

\subsubsection{The Random Access Bell Game in the $k$-th Round}
\label{sec:k_round}

In this subsection, we deduce theorems on the behavior of the random access Bell game. We elucidate the general forms of the states and the achievable Bell violation in an arbitrary $k$-th round before tackling the questions posed earlier. Before proceeding with the general case, we introduce some notation 
\begin{equation}
R(k) = \prod_{i=1}^{k}\sqrt{1-\lambda_i^2} \quad \text{and} \quad S(k) = \prod_{i=1}^k\frac{1+\sqrt{1-\lambda_i^2}}{2}.
\end{equation}
It is evident that $R(k) = R(k-1)\sqrt{1-\lambda_k^2}$ and $S(k) = S(k-1)\frac{1+\sqrt{1-\lambda_k^2}}{2}$. We also write $\ket{\text{GHZ}_{\alpha+}} = \sqrt{\alpha}\ket{000}+\sqrt{1-\alpha}\ket{111}$ and $\ket{\text{GHZ}_{\alpha-}} = \sqrt{\alpha}\ket{000}-\sqrt{1-\alpha}\ket{111}$. Using these conventions, we propose the general forms of the quantum states in the $k$-th round of the protocol.

\begin{theorem}
The state $\rho_{CAB_{k}}$, obtained after running the protocol $k-1$ times and implementing the {\tt {\tt SWITCH}} action in the $k$-th round, has the form 
\begin{eqnarray}
\rho_{CAB_{k}} &=& \left[\frac{1+R(k-1)}{4} +\frac{S(k-1)}{2}\right]\ket{\text{GHZ}_{\alpha+}}\bra{\text{GHZ}_{\alpha+}} \notag\\
&+& \left[\frac{1+R(k-1)}{4} -\frac{S(k-1)}{2}\right]\ket{\text{GHZ}_{\alpha-}}\bra{\text{GHZ}_{\alpha-}} \notag\\
&+& \left[\frac{1-R(k-1)}{4}\right] \bigg(2(1-\alpha)\ket{010}\bra{010}\notag\\
&& \hspace{4cm} + 2\alpha\ket{101}\bra{101}\bigg)
\label{eqn:rhocab}
\end{eqnarray}
\label{Th:rhocab_general}
\end{theorem}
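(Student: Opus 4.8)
The plan is to prove the formula by induction on $k$, propagating the state through one full round—an unsharp measurement on $C$ followed by a SWITCH action—at a time. For the base case $k=1$ I would note that $R(0)=S(0)=1$ (empty products), so the three bracketed weights collapse to $1$, $0$, and $0$, leaving $\rho_{CAB_1}=\ketbra{\text{GHZ}_{\alpha+}}{\text{GHZ}_{\alpha+}}=\ketbra{\text{GHZ}_\alpha}{\text{GHZ}_\alpha}$, exactly Eq.~\eqref{eqn:rhocab1}. Before the step I would rewrite the claimed form in the computational basis of $C$, where both maps act transparently: since the sum and difference of the two GHZ-diagonal coefficients are $\tfrac{1+R(k-1)}{2}$ and $S(k-1)$, the GHZ pieces combine into a population part $\tfrac{1+R(k-1)}{2}\big(\alpha\ketbra{000}{000}+(1-\alpha)\ketbra{111}{111}\big)$ plus a single coherence $S(k-1)\sqrt{\alpha(1-\alpha)}\big(\ketbra{000}{111}+\ketbra{111}{000}\big)$, alongside the $\ketbra{010}{010},\ketbra{101}{101}$ terms. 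This isolates the roles of $R$ (population weight) and $S$ (surviving GHZ coherence).

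For the inductive step I would first argue, exactly as already observed for round two, that the SWITCH maps both post-measurement branches $\rho^{E_+}_{CAB_k}$ and $\rho^{E_-}_{CAB_k}$ to one and the same operator, so it suffices to propagate the non-selectively measured state $\mathcal{M}_{\lambda_k}(\rho_{CAB_k})$, where $\mathcal{M}_{\lambda_k}(\cdot)=\sum_{\pm}\sqrt{E^{\lambda_k}_\pm}(\cdot)\sqrt{E^{\lambda_k}_\pm}$ acts on $C$ alone. Writing $\mu:=\sqrt{1-\lambda_k^2}$, a short computation in the $\ket{\pm}$ basis (where $\mathcal{M}_{\lambda_k}$ fixes diagonals and damps off-diagonals by $\mu$) gives, in the computational basis, $\mathcal{M}_{\lambda_k}(\ketbra00)=\tfrac{1+\mu}{2}\ketbra00+\tfrac{1-\mu}{2}\ketbra11$ (and symmetrically for $\ketbra11$) and $\mathcal{M}_{\lambda_k}(\ketbra01)=\tfrac{1+\mu}{2}\ketbra01+\tfrac{1-\mu}{2}\ketbra10$. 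Applying this termwise leaks a fraction $\tfrac{1-\mu}{2}$ of each population into its $C$-flipped partner and splits the GHZ coherence into a kept piece $\tfrac{1+\mu}{2}\ketbra{000}{111}$ and a flipped piece $\tfrac{1-\mu}{2}\ketbra{100}{011}$.

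The decisive step is the following SWITCH. Reading off Eq.~\eqref{eqn:Switch_Kraus}, each ket is repinned as $\ket{0a0}\mapsto\ket{0a0}$, $\ket{0a1}\mapsto\ket{0a0}$, $\ket{1a0}\mapsto\ket{1a1}$, $\ket{1a1}\mapsto\ket{1a1}$, and a dyad $\ketbra{u}{v}$ survives only if one Kraus operator acts nontrivially on both $\ket u$ and $\ket v$. I would check that $\ketbra{000}{111}$ survives intact through $K_2$, while $\ketbra{100}{011}$ is annihilated since no single $K_i$ sends both $\ket{100}$ and $\ket{011}$ to nonzero. Collecting the eight population contributions then recombines them, after repinning, into $\ket{000},\ket{111}$ populations of total weight $\tfrac{1+R(k-1)\mu}{2}$ and $\ket{010},\ket{101}$ populations of total weight $\tfrac{1-R(k-1)\mu}{2}$, while the coherence carries $S(k-1)\tfrac{1+\mu}{2}$. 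Recognizing $R(k-1)\mu=R(k)$ and $S(k-1)\tfrac{1+\mu}{2}=S(k)$ from the stated recursions reproduces the claimed form with $k\to k+1$, closing the induction.

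I would flag two obstacles. The clerical one is bookkeeping the eight leaked population terms and combining the $(1\pm\mu)$ and $(1\pm R)$ factors correctly into the compact $1\pm R\mu$. The conceptual one, and the heart of the argument, is the \emph{asymmetry} with which the SWITCH treats the split coherence: the kept piece survives with factor $\tfrac{1+\mu}{2}$ (driving the $S$-recursion) whereas the flipped piece is destroyed. It is precisely this annihilation that decouples the coherence from the population dynamics and produces the two distinct products $R(k)$ and $S(k)$ rather than a single decay factor.
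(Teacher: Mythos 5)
Your proposal is correct in its computational core and shares the paper's overall skeleton (induction on rounds, propagating one measurement-plus-{\tt SWITCH} step), but it executes the inductive step by a genuinely different decomposition. The paper's Appendix~\ref{app:proof_th1} keeps both selective branches $\rho^{E_\pm}_{CAB_{k}}$ explicit, writes them in the $\ket{\pm}\otimes(\text{Bell basis})$ form of Eq.~\eqref{eq:appendix_postmeasurement_state}, pushes them through the {\tt SWITCH}, and recovers the claimed form by a long coefficient collection, with outcome independence emerging at the end because the $\pm$ signs cancel. You instead (i) rewrite the inductive form in the computational basis of $C$, so that $\tfrac{1+R(k-1)}{2}$ is the weight of the $\{\ket{000},\ket{111}\}$ population sector and $S(k-1)$ multiplies the single coherence $\ketbra{000}{111}+\ketbra{111}{000}$; (ii) propagate the non-selective map $\mathcal{M}_{\lambda_k}$ instead of the two branches; and (iii) apply $\mathcal{M}_{\lambda_k}$ and $\mathcal{K}$ termwise via dephasing and repinning rules read off Eq.~\eqref{eqn:Switch_Kraus}. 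I verified your rules and the recombination: the eight population terms reassemble with the correct $\alpha$ weights into sector weights $\tfrac{1\pm R(k-1)\sqrt{1-\lambda_k^2}}{2}$, and the coherence splits into a kept piece of weight $\tfrac{1+\sqrt{1-\lambda_k^2}}{2}$ (preserved by $K_2$) and a flipped piece $\ketbra{100}{011}$ annihilated by $\mathcal{K}$. This is the more transparent route: the two recursions $R(k)=R(k-1)\sqrt{1-\lambda_k^2}$ and $S(k)=S(k-1)\tfrac{1+\sqrt{1-\lambda_k^2}}{2}$ appear as two physically distinct mechanisms (population leakage plus repinning, versus asymmetric survival of the coherence), whereas in the paper's version the same facts are buried in the algebra.

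There is one genuine incompleteness. Step (ii) is legitimate only if $\mathcal{K}$ sends both selective branches to the same operator, and you justify this by citing round two; that observation was made for the special state of round one (a pure $\ket{\text{GHZ}_\alpha}$), whereas the inductive step needs it for the general state of the theorem, so it must be proved there rather than quoted. Your own tools suffice: both outcomes occur with probability $\tfrac12$ (each population gives $\bra{c}E^{\lambda_k}_{\pm}\ket{c}=\tfrac12$, and the coherence is traceless against $E^{\lambda_k}_\pm\otimes\mathbb{I}_{AB}$), hence $\rho^{E_+}_{CAB_{k}}-\rho^{E_-}_{CAB_{k}}=2\lambda_k\big(P_+\rho_{CAB_{k}}P_+-P_-\rho_{CAB_{k}}P_-\big)$ with $P_\pm=\ketbra{\pm}{\pm}_C$; expanding this difference in the computational basis produces only dyads (such as $\ketbra{000}{100}$ or $\ketbra{000}{011}$) whose ket and bra are supported by \emph{different} Kraus operators of Eq.~\eqref{eqn:Switch_Kraus}, so every term $K_i(\cdot)K_i^\dagger$ vanishes and $\mathcal{K}$ annihilates the difference. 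This lemma is not optional bookkeeping: the theorem concerns the state conditioned on whichever outcomes actually occurred, so without it your computation of $\mathcal{K}\big(\mathcal{M}_{\lambda_k}(\rho_{CAB_{k}})\big)$ identifies only the outcome-averaged state.
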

\begin{proof}
We prove this by induction. From Eqs.~\eqref{eqn:rhocab1} and~\eqref{eqn:rhocab2}, it is easy to verify that $\rho_{CAB_{1}}$ and $\rho_{CAB_{2}}$ satisfy the above form. Assuming it is also true for some $\rho_{CAB_{k}}$, the induction step for  $\rho_{CAB_{k+1}}$ can be derived by proceeding with the protocol. We prove the induction step in Appendix \ref{app:proof_th1}. 
\end{proof}

The state $\rho_{CAB_{k}}$ is independent of which of the two outcomes was produced during the unsharp measurement at any of the preceding $k-1$ rounds. This is because, just as in the first two rounds, the {\tt {\tt SWITCH}} action maps possible post-measurement states to the same output state, as can be inferred from the proof in Appendix \ref{app:proof_th1}. Using this form of the state, the post-measurement state obtained during the $k$-th round can be determined. The form is derived in Appendix \ref{app:proof_th1} (Eq.~\eqref{eq:appendix_postmeasurement_state}).

The state shared between Alice and Bob after the unsharp measurement can be calculated from Eq.~\eqref{eq:appendix_postmeasurement_state} by simply tracing out the subsystem C. It is also evident that the derived form is a generalization of the states obtained in the first two rounds (Eqs.~\eqref{eqn:rhoEcab1} and~\eqref{eqn:rhoEcab2}).    

\begin{corollary}
The post-measurement state of the subsystem shared by Alice and Bob at the $k$-th round, $\rho_{AB_k}^{E_\pm}$, has the form 
\begin{eqnarray}
&&\rho_{AB_{k}}^{E_\pm} \notag\\ &&= \left[\frac{1+R(k-1)}{4} +\frac{S(k-1)}{2}\right]\left(\frac{1\pm \lambda_k}{2} \ket{\Phi^+_\alpha}\bra{\Phi^+_\alpha} \right.\notag\\ && \hspace{5cm}\left.+\frac{1\mp \lambda_k}{2} \ket{\Phi^-_\alpha}\bra{\Phi^-_\alpha} \right) \notag\\
&&+ \left[\frac{1+R(k-1)}{4} -\frac{S(k-1)}{2}\right]\left(\frac{1\pm \lambda_k}{2} \ket{\Phi^-_\alpha}\bra{\Phi^-_\alpha} \right.\notag\\ && \hspace{5cm} \left.+\frac{1\mp \lambda_k}{2} \ket{\Phi^+_\alpha}\bra{\Phi^+_\alpha} \right) \notag \\
&&+ \left[\frac{1-R(k-1)}{4}\right] \bigg(\ket{\Psi^+_\alpha}\bra{\Psi^+_\alpha} + \ket{\Psi^-_\alpha}\bra{\Psi^-_\alpha}\bigg).
\label{eqn:rhoEab}
\end{eqnarray}
\end{corollary}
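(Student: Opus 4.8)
The plan is to obtain $\rho_{AB_k}^{E_\pm}$ simply as the partial trace $\Tr_C\big[\rho_{CAB_k}^{E_\pm}\big]$, where $\rho_{CAB_k}^{E_\pm}$ is the normalized post-measurement state already derived in Appendix~\ref{app:proof_th1} (Eq.~\eqref{eq:appendix_postmeasurement_state}). Equivalently, one may start directly from the pre-measurement state $\rho_{CAB_k}$ of Theorem~\ref{Th:rhocab_general}, apply the measurement operator $\sqrt{E_\pm^{\lambda_k}}\otimes\mathbb{I}_{AB}$, and trace out $C$; for that route the only extra ingredients are the control-basis rewriting
\[
\ket{\text{GHZ}_{\alpha\pm}} = \tfrac{1}{\sqrt2}\big(\ket{+}_C\ket{\Phi^\pm_\alpha}_{AB} + \ket{-}_C\ket{\Phi^\mp_\alpha}_{AB}\big),
\]
together with the fact that both outcomes occur with probability $\tfrac12$, so that a single normalization factor of two restores unit trace. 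I will present the cleaner route that starts from \eqref{eq:appendix_postmeasurement_state}.

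First I would note that $\rho_{CAB_k}^{E_\pm}$ splits into three blocks carrying the weights $\tfrac{1+R(k-1)}{4}+\tfrac{S(k-1)}{2}$, $\tfrac{1+R(k-1)}{4}-\tfrac{S(k-1)}{2}$, and $\tfrac{1-R(k-1)}{4}$, and that $\Tr_C$ acts on each through three elementary mechanisms. (i) Every control coherence $\ket{+}_C\bra{-}$---precisely the terms proportional to $\sqrt{1-\lambda_k^2}$ in the first two blocks---is annihilated because $\langle - | + \rangle = 0$, leaving only the diagonal $\Phi^\pm_\alpha$ projectors. (ii) The surviving $\ket{\pm}_C\!\otimes\ket{\Phi^\pm_\alpha}$ contributions reduce directly to $\ket{\Phi^\pm_\alpha}\bra{\Phi^\pm_\alpha}$ with the weights $\tfrac{1\pm\lambda_k}{2}$ and $\tfrac{1\mp\lambda_k}{2}$, reproducing the first two lines of \eqref{eqn:rhoEab}. (iii) In the third block each $\ket{\Psi^\pm_\alpha}$ projector appears twice, once dressed by $\ket{+}_C$ with weight $\tfrac{1\pm\lambda_k}{2}$ and once by $\ket{-}_C$ with weight $\tfrac{1\mp\lambda_k}{2}$; tracing out $C$ adds these to $1$, so the $\lambda_k$ dependence cancels and the block becomes manifestly outcome-independent.

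The final check is that the computational-basis remainders in the third block disappear: the $\pm\sqrt{1-\lambda_k^2}$ terms on $\ket{010}$ and $\ket{110}$ share the identical $AB$ reduction $\ket{10}\bra{10}$ but carry opposite signs and distinct control labels, and likewise $\ket{101}$ and $\ket{001}$ both reduce to $\ket{01}\bra{01}$ with opposite signs, so each pair cancels under $\Tr_C$. Collecting the survivors yields exactly \eqref{eqn:rhoEab}, with the outcome dependence confined to the $\tfrac{1\pm\lambda_k}{2}$ splittings of the two $\Phi$ blocks. I do not expect a genuine obstacle: this is a bookkeeping corollary of Theorem~\ref{Th:rhocab_general}, and the only places demanding care are carrying the $\pm/\mp$ sign pattern consistently so that the cancellations in (iii) and among the computational-basis remainders occur exactly, and---should one use the alternative route from $\rho_{CAB_k}$---verifying that both measurement probabilities equal $\tfrac12$ so that a uniform normalization constant suffices.
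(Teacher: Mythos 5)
Your proposal is correct and follows essentially the same route as the paper: the paper obtains this corollary precisely by tracing out the control $C$ from the normalized post-measurement state of Eq.~\eqref{eq:appendix_postmeasurement_state}, with the $\sqrt{1-\lambda_k^2}$ control coherences killed by $\Tr_C$, the $\tfrac{1\pm\lambda_k}{2}$ weights in the $\Psi$ sector summing to one, and the computational-basis remainders cancelling in pairs, exactly as you describe. Your verification of the outcome probabilities being $\tfrac12$ in the alternative route is a correct (and slightly more explicit) consistency check than the paper provides.
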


We can now evaluate the $T_{\rho}$ matrix to measure the Bell violation on the A:B cut. Given that the state is always in the Bell basis, the form of the matrix is particularly convenient.  

\begin{theorem}
The matrix $T_{\rho_{AB_k}^{E_\pm}}$ for the post-measurement subsystem of Alice and Bob during the $k$-th round of the protocol has the form
\begin{align}
T_{\rho_{AB_k}^{E_\pm}}
&= \mathrm{\text{Diag}}\Big(
     \pm 2\sqrt{\alpha(1-\alpha)}\,S(k-1)\lambda_k, \notag\\
&\hspace{1cm}
     \mp 2\sqrt{\alpha(1-\alpha)}\,S(k-1)\lambda_k, R(k-1)
   \Big) \notag\\[4pt]
&= \mathrm{\text{Diag}}\Big(
     \pm 2\sqrt{\alpha(1-\alpha)}
       \prod_{i=1}^{k-1}\tfrac{1+\sqrt{1-\lambda_i^2}}{2}\lambda_k, \notag\\
&\hspace{1cm}
     \mp 2\sqrt{\alpha(1-\alpha)}
       \prod_{i=1}^{k-1}\tfrac{1+\sqrt{1-\lambda_i^2}}{2}\lambda_k, 
     \prod_{i=1}^{k-1}\sqrt{1-\lambda_i^2}
   \Big).
\label{eqn:Tmatrix}
\end{align}

where the signs $\pm$ depend upon the measurement outcome in the $k$-th round.
\label{Th:Tmatrix}
\end{theorem}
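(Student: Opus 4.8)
The plan is to exploit the single structural fact emphasized just before the statement: the entries of the correlation matrix $T_\rho$ are \emph{linear} in $\rho$. Consequently, if a state is written as a convex mixture $\rho = \sum_j p_j \sigma_j$, then $T_\rho = \sum_j p_j T_{\sigma_j}$, and the whole computation collapses to a weighted sum of the four generalized Bell-state matrices $T_{\Phi^\pm_\alpha}, T_{\Psi^\pm_\alpha}$ already tabulated above. Since the Corollary expresses $\rho_{AB_k}^{E_\pm}$ (Eq.~\eqref{eqn:rhoEab}) precisely as such a mixture in the generalized Bell basis, no diagonalization or eigenvalue computation is needed; I only have to read off the weights and add.

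First I would collect the coefficient of each projector from Eq.~\eqref{eqn:rhoEab}. Writing $A = \tfrac{1+R(k-1)}{4}+\tfrac{S(k-1)}{2}$, $B = \tfrac{1+R(k-1)}{4}-\tfrac{S(k-1)}{2}$, and $C = \tfrac{1-R(k-1)}{4}$, the weights are
\begin{align}
p_{\Phi^+} &= A\,\tfrac{1\pm\lambda_k}{2} + B\,\tfrac{1\mp\lambda_k}{2}, \notag\\
p_{\Phi^-} &= A\,\tfrac{1\mp\lambda_k}{2} + B\,\tfrac{1\pm\lambda_k}{2}, \notag\\
p_{\Psi^+} &= p_{\Psi^-} = C. \notag
\end{align}
Then I would form $T_{\rho_{AB_k}^{E_\pm}} = p_{\Phi^+}T_{\Phi^+_\alpha}+p_{\Phi^-}T_{\Phi^-_\alpha}+p_{\Psi^+}T_{\Psi^+_\alpha}+p_{\Psi^-}T_{\Psi^-_\alpha}$ entry by entry. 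Because all four tabulated matrices are diagonal, the result is automatically diagonal, and two observations do the work: in the first two diagonal slots the $\Psi^+$ and $\Psi^-$ contributions carry opposite signs and equal weight $C$, so they cancel, leaving only the $\Phi$ difference $p_{\Phi^+}-p_{\Phi^-} = \pm\lambda_k(A-B) = \pm\lambda_k S(k-1)$; in the third slot the two $\Phi$ terms reinforce at $+1$ and the two $\Psi$ terms reinforce at $-1$, giving $(p_{\Phi^+}+p_{\Phi^-})-(p_{\Psi^+}+p_{\Psi^-}) = (A+B)-2C = R(k-1)$. Substituting these into the weighted sum yields exactly Eq.~\eqref{eqn:Tmatrix}, and expanding $R(k-1)$ and $S(k-1)$ as the stated products completes the claim.

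The computation contains no genuine obstacle; the only care required is bookkeeping the $\pm/\mp$ signs through the unsharp-measurement weights and recognizing the two collapses $A-B=S(k-1)$ and $(A+B)-2C=R(k-1)$. The substantive input is really the Corollary's form (Eq.~\eqref{eqn:rhoEab}), which follows from Theorem~\ref{Th:rhocab_general} by tracing out the control $C$; given that, Theorem~\ref{Th:Tmatrix} is essentially a one-line consequence of linearity. As a consistency check I would verify $p_{\Phi^+}+p_{\Phi^-}+p_{\Psi^+}+p_{\Psi^-}=(A+B)+2C=1$, confirming that the weights form a valid probability distribution and hence that $\rho_{AB_k}^{E_\pm}$ is correctly normalized.
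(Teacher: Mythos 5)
Your proposal is correct and follows essentially the same route as the paper's own proof in Appendix~\ref{app:proof_th2}: both exploit the linearity of $T_\rho$ over the generalized-Bell-basis decomposition of $\rho_{AB_k}^{E_\pm}$ and sum the tabulated diagonal matrices, with the same two collapses $A-B=S(k-1)$ and $(A+B)-2C=R(k-1)$ doing the work. The only difference is bookkeeping order (you collect per-projector weights first; the paper sums bracket by bracket), which is immaterial.
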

\begin{proof}
The proof is given in Appendix \ref{app:proof_th2}.
\end{proof}

The matrix has particularly simple form in terms of the functions defined earlier. It is a direct generalization of the forms obtained in the first two rounds. Using the $T_\rho$ matrix, it is straightforward to calculate the Bell violation after $k$ rounds. This leads to our more important result.

\begin{theorem}
The Bell violation $|\mathcal{B}_k|$ after $k$ rounds is given by
\begin{align}
|\mathcal{B}_k| 
&= \max\Bigg(
    2\sqrt{2}\,\sqrt{4\alpha(1-\alpha)}\,
    S(k\!-\!1)\lambda_k,\notag\\
&\qquad\quad
    2\sqrt{
        4\alpha(1-\alpha)\,S(k\!-\!1)^2\lambda_k^2
        + R(k\!-\!1)^2
    }
\Bigg) \notag\\[4pt]
&\hspace{-0.7cm}=  \max\left(
    2\sqrt{2}\,\sqrt{4\alpha(1-\alpha)}\,
    \Bigg[\prod_{i=1}^{k-1}
        \tfrac{1+\sqrt{1-\lambda_i^2}}{2}
    \Bigg]\lambda_k,\right.\notag\\
&\left.
    2\sqrt{
        4\alpha(1-\alpha)
        \Bigg[\prod_{i=1}^{k-1}
            \tfrac{1+\sqrt{1-\lambda_i^2}}{2}
        \Bigg]^2\lambda_k^2
        + \Bigg[\prod_{i=1}^{k-1}(1-\lambda_i^2)\Bigg]
    }
\right)
\label{eqn:Bellviolation_wrt_lambdaparams}
\end{align}
\end{theorem}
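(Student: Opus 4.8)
The plan is to read this off directly from the Horodecki criterion together with Theorem~\ref{Th:Tmatrix}. Recall that for a two-qubit state $\rho$ the optimal CHSH value is $|\mathcal{B}(\rho)| = 2\sqrt{M(\rho)}$, where $M(\rho)$ is the sum of the two largest eigenvalues of $T_\rho^{\mathsf{T}} T_\rho$ (equivalently, the two largest squared singular values of $T_\rho$). Since Theorem~\ref{Th:Tmatrix} already delivers $T_{\rho_{AB_k}^{E_\pm}}$ in diagonal form, the only remaining work is to extract these eigenvalues and add the two largest; there is no new analytic content, only bookkeeping.

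First I would observe that for a diagonal matrix $T = \mathrm{Diag}(t_1,t_2,t_3)$ the product $T^{\mathsf{T}} T$ is again diagonal, so its eigenvalues are simply $t_1^2, t_2^2, t_3^2$. Reading off the entries from Theorem~\ref{Th:Tmatrix}, we have $t_1^2 = t_2^2 = 4\alpha(1-\alpha)\,S(k-1)^2\lambda_k^2$ and $t_3^2 = R(k-1)^2$. The sign ambiguity $\pm$ coming from the $k$-th measurement outcome sits only on the off-diagonal placement of signs in the first two slots and disappears upon squaring, so the eigenvalues are outcome-independent. Crucially, the first two eigenvalues coincide, so the ordering of the triple is governed entirely by comparing $t_1^2$ with $t_3^2$.

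Next I would split the sum of the two largest eigenvalues into the two cases this comparison produces. When $t_1^2 \geq t_3^2$, the two largest are $t_1^2$ and $t_2^2$, giving $M(\rho) = 2t_1^2 = 8\alpha(1-\alpha)\,S(k-1)^2\lambda_k^2$; when $t_3^2 \geq t_1^2$, the two largest are $t_3^2$ and $t_1^2$, giving $M(\rho) = t_1^2 + t_3^2 = 4\alpha(1-\alpha)\,S(k-1)^2\lambda_k^2 + R(k-1)^2$. Since $2t_1^2 \geq t_1^2 + t_3^2$ holds exactly when $t_1^2 \geq t_3^2$, the two branches merge into a single $\max$, and because $\sqrt{\cdot}$ is monotone we may pull it through: $|\mathcal{B}_k| = 2\sqrt{M(\rho)} = \max\!\big(2\sqrt{2}\,|t_1|,\, 2\sqrt{t_1^2+t_3^2}\big)$. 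Using $|t_1| = \sqrt{4\alpha(1-\alpha)}\,S(k-1)\lambda_k$ recovers the first argument, and substituting the product definitions $S(k-1)=\prod_{i=1}^{k-1}\tfrac{1+\sqrt{1-\lambda_i^2}}{2}$ and $R(k-1)^2 = \prod_{i=1}^{k-1}(1-\lambda_i^2)$ yields the explicit second equality.

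I do not anticipate a genuine obstacle here, as the hard work—deriving the diagonal $T$-matrix—is carried out in the proof of Theorem~\ref{Th:Tmatrix} in the appendix. The only point deserving care is the degeneracy $t_1^2 = t_2^2$, which is precisely what guarantees that the $\max$ cleanly selects the correct branch and that the result is insensitive to the measurement outcome; a sanity check against the round-one expression $M = 1 + 4\alpha(1-\alpha)\lambda_1^2$ (the $t_3^2 \geq t_1^2$ branch, valid since $4\alpha(1-\alpha)\lambda_1^2 \leq 1$) and the round-two formula in Eq.~\eqref{eqn:Bellviol_round2} confirms that the general expression specializes correctly.
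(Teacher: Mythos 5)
Your proposal is correct and follows essentially the same route as the paper's own proof: starting from the diagonal $T$-matrix of Theorem~\ref{Th:Tmatrix}, forming $U = T^{\mathsf{T}}T$, and applying Horodecki's criterion $|\mathcal{B}_k| = 2\sqrt{M(\rho)}$ with $M(\rho)$ the sum of the two largest eigenvalues of $U$. The only difference is that you spell out the case analysis behind the $\max$ (and verify outcome-independence and consistency with rounds one and two), which the paper leaves implicit.
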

\begin{proof}
Starting with \begin{align}
T_{\rho_{AB_k}^{E_\pm}}
&= \mathrm{\text{Diag}}\Big(
     \pm 2\sqrt{\alpha(1-\alpha)}\,S(k-1)\lambda_k, \notag\\
&\hspace{1cm}
     \mp 2\sqrt{\alpha(1-\alpha)}\,S(k-1)\lambda_k, R(k-1)
   \Big), \notag
\end{align} the $U_{\rho_{AB_k}^{E_\pm}}$ matrix can be calculated as 
\begin{eqnarray}
U_{\rho_{AB_k}^{E_\pm}}&=&\Big(T_{\rho_{AB_k}^{E_\pm}}\Big)^T\Big(T_{\rho_{AB_k}^{E_\pm}}\Big) \notag\\
&=& \text{Diag}\left( 4\alpha(1-\alpha) S(k-1)^2\lambda_k^2, \right.\notag\\
&&\left.~ 4\alpha(1-\alpha)S(k-1)^2\lambda_k^2,~ R(k-1)^2\right).\notag \\
\end{eqnarray}
Now, the critical quantity $M\left(\rho_{AB_k}^{E_\pm}\right)$ is the sum of the two largest eigenvalues of the diagonal $U_{\rho_{AB_k}^{E_\pm}}$ matrix. Since, $|\mathcal{B}_k| = 2\sqrt{M(\rho_{AB_k}^{E_\pm})}$, we get the desired expression. 
\end{proof}

Just like in the first two rounds, the Bell violation is independent of the chosen post-measurement outcomes of any of the unsharp measurements. All outcomes lead to the same Bell violation. The violation decays as $\alpha$ values get closer to 0 or 1 as expected.

Also, similar to the first two rounds,a trade-off is introduced by the unsharp measurements. The Bell violation in round $k$, $|\mathcal{B}_k|$, increases with the corresponding sharpness parameter $\lambda_k$ but decreases with the sharpness parameters $\{\lambda_1, \dots,\lambda_{k-1} \}$ of all preceding rounds. Consequently, a higher violation in an earlier round comes at the cost of the potential violation in a later one. This behavior can be exploited to generate desired Bell violations, which is the key to our next set of results, where we determine the value of $k_{\max}$ and the maximum achievable violation in any round $k\leq k_{\max}$.

\begin{theorem} A Bell violation can be obtained for an arbitrarily large number of rounds; thus, $k_{\max} = \infty$.
\label{Th:theorem_Bellviolation}
\end{theorem}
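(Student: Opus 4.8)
The plan is to prove the statement constructively: for every horizon $N$ I would exhibit an explicit schedule of sharpness parameters $\lambda_1,\dots,\lambda_N\in(0,1]$ for which $|\mathcal{B}_k|>2$ \emph{simultaneously} at all rounds $k\le N$. Since $N$ is arbitrary this rules out any finite bound on $k_{\max}$, giving $k_{\max}=\infty$. The schedule will have to depend on $N$: the closed form for $|\mathcal{B}_k|$ already shows that the round-$k$ violation grows with $\lambda_k$ but is suppressed by every earlier $\lambda_i$, so the measurement disturbance must be allocated relative to the target horizon rather than fixed once and for all.

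First I would reduce the violation condition to a single scalar inequality. Writing $a_k=4\alpha(1-\alpha)\,S(k-1)^2\lambda_k^2$ (doubly degenerate) and $b_k=R(k-1)^2$ for the eigenvalues of $U_{\rho_{AB_k}^{E_\pm}}$, the quantity $M(\rho_{AB_k}^{E_\pm})$ is the sum of the two largest, i.e. $\max(2a_k,\,a_k+b_k)$. The key simplification is that in \emph{either} branch one has $M\ge a_k+b_k$, so it suffices to guarantee $a_k+b_k>1$, whence $|\mathcal{B}_k|=2\sqrt{M}\ge 2\sqrt{a_k+b_k}>2$. This removes the case analysis on the maximum entirely.

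Next I would take the schedule to be geometric, $\lambda_k^2=\eta\,q^{\,k-N}$, with ratio $q$ large enough (depending only on $\alpha$) and scale $\eta$ small enough (depending on $N$) that $\lambda_N^2=\eta\le 1$. Two elementary product estimates, both consequences of the Weierstrass inequality $\prod_i(1-x_i)\ge 1-\sum_i x_i$ together with $1-\sqrt{1-\lambda^2}\le\lambda^2$, give the lower bounds $R(k-1)^2\ge 1-\sum_{i<k}\lambda_i^2$ and $S(k-1)^2\ge 1-\tfrac{q}{q-1}\eta$. Substituting these and using the geometric identity $\sum_{i<k}\lambda_i^2=\tfrac{\lambda_k^2-\eta q^{1-N}}{q-1}$ collapses the bound to $a_k+b_k\ge 1+\lambda_k^2\big[4\alpha(1-\alpha)-\tfrac1{q-1}\big]+\tfrac{\eta q^{1-N}}{q-1}-O(\eta^2)$. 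Choosing $q>1+\tfrac1{4\alpha(1-\alpha)}$ makes the bracket nonnegative, so the constant term $\tfrac{\eta q^{1-N}}{q-1}>0$ keeps $a_k+b_k$ strictly above $1$ for all $k\le N$ once $\eta$ is small enough to dominate the $O(\eta^2)$ correction.

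The hard part is exactly the monotone trade-off flagged after the previous theorem: each measurement spends coherence, and the accumulated $\sum_{i<k}\lambda_i^2$ entering $R(k-1)$ erodes later violations, so a constant or decreasing schedule fails (indeed no single infinite sequence can keep every round above threshold). The real content of the proof is that a geometric, increasing allocation beats this erosion, the ratio condition $q>1+1/[4\alpha(1-\alpha)]$ guaranteeing that each round's information gain $a_k$ more than compensates the coherence already spent. I would close with the transparent special case $\alpha=\tfrac12$, where one may take $q=2$, $\lambda_k^2=\eta\,2^{\,k-N}$ with any $\eta<2^{-N}$, giving the clean per-round margin $a_k+b_k\ge 1+\eta\,2^{1-N}-2\eta^2>1$.
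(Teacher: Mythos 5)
Your proof is correct, and while it follows the same skeleton as the paper's---a horizon-dependent geometric schedule together with the reduction $M=\max(2a_k,\,a_k+b_k)\ge a_k+b_k$, so that forcing $a_k+b_k>1$ suffices---the technical engine is genuinely different. The paper takes your $\eta=1$, anchoring $\lambda_k=1$ at the target round, and proves the resulting scalar inequality by induction on the round index (Appendix~\ref{app:proof_th3}) under the condition $q\ge\sqrt{2/[\alpha(1-\alpha)]}$ (its $q$ is the square root of yours); you instead damp the whole schedule by $\eta<q^{-N}$ and replace the induction by two elementary product bounds, $\prod_i(1-x_i)\ge 1-\sum_i x_i$ and $1-\sqrt{1-\lambda^2}\le\lambda^2$, plus a geometric-sum identity. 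Your route is more elementary and your ratio condition $q>1+1/[4\alpha(1-\alpha)]$ is milder, but the price is that every round's margin is exponentially small in $N$, namely $a_k+b_k\ge 1+\eta q^{1-N}/(q-1)-O(\eta^2)$; the paper's $\lambda_k=1$ normalization instead delivers a near-Tsirelson violation at the target round for large $q$, which is exactly what Theorem~\ref{Th:near_max_viol} subsequently exploits, so your schedule proves this theorem but could not be reused there. Finally, your parenthetical observation that no fixed infinite sequence can sustain violations is correct (and not stated explicitly in the paper): a round-one violation needs $\lambda_1>0$, after which each later violation forces $\lambda_k^2>\min\{\lambda_1^2,1/2\}$, so $S(k-1)$ and $R(k-1)$ decay geometrically and the violations must eventually die---this is precisely why both proofs must let the schedule depend on the horizon.
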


\begin{proof}
Suppose we wish to establish a Bell violation in round $k$ and all preceding rounds. The observed trade-off suggests the following strategy. We select a scaling parameter $q>1$ and set the sharpness parameters as $\lambda_k =1, \lambda_{k-1}=1/q, \dots, \lambda_m = 1/q^{k-m}, \dots, \lambda_1 = 1/q^{k-1}$. For any round $m$, the sharpness parameter $\lambda_m$ is significantly higher than those of all preceding rounds. This leads to a Bell violation in the $m$-th round of

\begin{widetext}
\begin{eqnarray}
|\mathcal{B}_m|&=& \max\Bigg( \frac{2\sqrt{2}}{q^{k-m}}\sqrt{4\alpha(1-\alpha)}\Bigg[\prod_{i=1}^{m-1}\frac{1+\sqrt{1-q^{-2(k-i)}}}{2}\Bigg], \notag\\
&& \hspace{3cm}~2\sqrt{4\alpha(1-\alpha)\Bigg[\prod_{i=1}^{m-1}\frac{1+\sqrt{1-q^{-2(k-i)}}}{2}\Bigg]^2 \frac{1}{q^{2(k-m)}} + \Bigg[\prod_{i=1}^{m-1}1-q^{-2(k-i)}\Bigg]}~\Bigg).
\label{eq:bellviol_m_sharpness}
\end{eqnarray}
\end{widetext}
To prove $|\mathcal{B}_m| > 2$, it suffices to show that the term under the square root in the second argument is greater than 1. In Appendix \ref{app:proof_th3}, we prove that by setting $q \geq \sqrt{2/[\alpha(1-\alpha)]}$, we have
\begin{eqnarray}
&& 4\alpha(1-\alpha)\left[\prod_{i=1}^{m-1}\frac{1+\sqrt{1-q^{-2(k-i)}}}{2}\right]^2 \frac{1}{q^{2(k-m)}} \notag\\ &&\hspace{2.5cm} + \prod_{i=1}^{m-1}(1-q^{-2(k-i)}) > 1,
\end{eqnarray}
and thus $|\mathcal{B}_m| > 2$, constituting a Bell violation at any arbitrary round $m \leq k$. Since $k$ was also chosen arbitrarily, this proves that a violation can be observed for an arbitrarily large number of rounds, and thus $k_{\max} = \infty$.
\end{proof}

This result also shows that generalized GHZ states are a powerful resource for our protocol, since any initial state with non-trivial entanglement $(\alpha \neq 0,1)$ can provide violations for an arbitrary number of rounds. As we increase the scaling factor $q$, the drop off in sharpness parameter for all the rounds before $k$ becomes much steeper. As the violation in round $k$ increases with a decrease in the sharpness parameters of previous rounds, we expect the violation $|\mathcal{B}_k|$ to grow with the scaling factor $q$. This is indeed the case, by tweaking the scaling parameter, and $\alpha$ value, Bell violation arbitrary close to the Tsirelson bound of $2\sqrt{2}$ can be obtained in the $k$th round,while still retaining at least some violation in all preceding rounds.

\begin{theorem} A near-maximal Bell violation can be obtained in the $k$-th round of a successful Bell game, where $k$ can be made arbitrarily large. 
\label{Th:near_max_viol}
\end{theorem}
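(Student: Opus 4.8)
The plan is to reuse the geometric-progression ansatz from the proof of Theorem~\ref{Th:theorem_Bellviolation} and then read off how close the violation in the target round can be pushed toward the Tsirelson bound. Fix the target round $k$ (which may be taken as large as we like) and a scaling parameter $q>1$, and set the sharpness parameters $\lambda_m = q^{-(k-m)}$ for $m<k$ together with $\lambda_k = 1$. By Theorem~\ref{Th:theorem_Bellviolation}, taking $q \geq \sqrt{2/[\alpha(1-\alpha)]}$ already guarantees $|\mathcal{B}_m| > 2$ in every preceding round $m<k$, so the only remaining task is to bound $|\mathcal{B}_k|$ from below and show it can be made arbitrarily close to $2\sqrt{2}$. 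I would specialize to $\alpha = 1/2$ (so that $4\alpha(1-\alpha)=1$ and the Tsirelson value $2\sqrt2$ is in principle reachable), which requires $q \geq 2\sqrt{2}$ to keep all earlier violations; both constraints are satisfied by taking $q$ large.

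For the target round I would use the lower bound furnished by the first argument of the maximum in Eq.~\eqref{eqn:Bellviolation_wrt_lambdaparams}. With $\lambda_k = 1$ and $\alpha = 1/2$ this reads
\[
|\mathcal{B}_k| \geq 2\sqrt{2}\,S(k-1) = 2\sqrt{2}\prod_{j=1}^{k-1}\frac{1+\sqrt{1-q^{-2j}}}{2},
\]
where I re-indexed by $j = k-i$. The entire argument then reduces to showing that this product stays close to $1$ \emph{uniformly in} $k$.

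The key step is a $k$-independent lower bound on $S(k-1)$. Using $\sqrt{1-x}\geq 1-x$ for $x\in[0,1]$, each factor obeys
\[
\frac{1+\sqrt{1-q^{-2j}}}{2}\geq 1-\tfrac{1}{2}q^{-2j},
\]
and then the elementary inequality $\prod_j(1-a_j)\geq 1-\sum_j a_j$, combined with the geometric sum $\sum_{j\geq 1} q^{-2j} = 1/(q^2-1)$, yields
\[
S(k-1)\geq 1-\frac{1}{2(q^2-1)},
\]
a bound that does not depend on $k$. Hence $|\mathcal{B}_k|\geq 2\sqrt{2}\bigl(1-\tfrac{1}{2(q^2-1)}\bigr) = 2\sqrt{2}-\tfrac{\sqrt{2}}{q^2-1}$, and since $|\mathcal{B}_k|\leq 2\sqrt{2}$ by Tsirelson's bound, the violation in round $k$ converges to $2\sqrt{2}$ as $q\to\infty$, for arbitrarily large $k$. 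Together with the first paragraph this establishes that a near-maximal violation is attainable at any chosen round while preserving violations in all earlier rounds.

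I expect the main obstacle to be precisely this uniformity in $k$: a careless estimate of the cumulative $S$-factor could degrade as more rounds are added, and the crux is recognizing that the geometric decay $\lambda_m = q^{-(k-m)}$ makes the tail $\sum_j q^{-2j}$ summable with a bound independent of the number of rounds. Consequently the total disturbance inflicted by all the earlier unsharp measurements remains controlled no matter how large $k$ is, which is exactly what lets the target-round violation approach the Tsirelson bound irrespective of the round index.
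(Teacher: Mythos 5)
Your proposal is correct and takes essentially the same route as the paper: the identical geometric-progression ansatz $\lambda_m = q^{-(k-m)}$, $\lambda_k = 1$ with $\alpha = 1/2$, reliance on Theorem~\ref{Th:theorem_Bellviolation} for all rounds $m<k$, and extraction of the round-$k$ violation from the first argument of the maximum in Eq.~\eqref{eqn:Bellviolation_wrt_lambdaparams}. The only divergence is in the final estimate: the paper identifies the maximum exactly and expands asymptotically to obtain $|\mathcal{B}_k| \approx 2\sqrt{2} - \tfrac{1}{\sqrt{2}\,q^2}$, whereas your chain of $\sqrt{1-x}\geq 1-x$, the Weierstrass product inequality, and the geometric sum yields the rigorous, explicitly $k$-uniform bound $|\mathcal{B}_k| \geq 2\sqrt{2} - \tfrac{\sqrt{2}}{q^2-1}$, a minor but genuine tightening since the paper leaves the uniformity in $k$ implicit.
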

\begin{proof}
Using the scaling factor $q$ and setting the sharpness parameters to $\{\lambda_k=1, \lambda_{k-1} = 1/q, \dots, \lambda_1=1/q^{k-1}\}$, along with $\alpha = 1/2$, we evaluate the violation observed in the final round $k$ from expression Eq.~\eqref{eq:bellviol_m_sharpness}:

\begin{eqnarray}
|\mathcal{B}_k| &=& \max\left( 2\sqrt{2}\prod_{i=1}^{k-1}\frac{1+\sqrt{1-q^{-2(k-i)}}}{2}, \right. \notag\\
&& \hspace{-1cm}\left. 2\sqrt{\left[\prod_{i=1}^{k-1}\frac{1+\sqrt{1-q^{-2(k-i)}}}{2}\right]^2 + \prod_{i=1}^{k-1}(1-q^{-2(k-i)})}\right). \notag\\ \end{eqnarray}
Since $\frac{1+\sqrt{1-q^{-2(k-i)}}}{2}$ is the average of 1 and $\sqrt{1-q^{-2(k-i)}}$, it is always larger than the $\sqrt{1-q^{-2(k-i)}}$ term and thus 
\begin{eqnarray}
&& |\mathcal{B}_k|=  2\sqrt{2}\Bigg[\prod_{i=1}^{k-1}\frac{1+\sqrt{1-q^{-2(k-i)}}}{2}\Bigg], 
\end{eqnarray}
It is evident that $|\mathcal{B}_k|$ grows with $q$. In the limit of arbitrarily large $q$ we get 
\begin{eqnarray}
 \lim_{q\to \infty}|\mathcal{B}_k| &\approx&  2\sqrt{2}\prod_{i=1}^{k-1}\Bigg[1-\frac{1}{4q^{2(k-i)}}\Bigg] \notag\\ &\approx& 2\sqrt{2} -\frac{1}{\sqrt{2}q^2} +O\left(\frac{1}{q^4}\right).  
\end{eqnarray}
This implies that we can set an arbitrarily high value for $q$ to obtain any desired violation up to the Tsirelson bound. The guarantee of at least some violation in all previous rounds is ensured by Theorem \ref{Th:theorem_Bellviolation}.
\end{proof}

We have proved not only that violations can be obtained for any number of rounds but also that a near-maximal violation can be achieved at any round while retaining at least some violation in all preceding rounds. This demonstrates that a combination of the quantum {\tt SWITCH} and unsharp measurements can be utilized effectively to preserve Bell violations against entanglement-breaking channels. Furthermore, as all post measurement outcomes lead to the same states and violations and successive rounds, we could set in our protocol that no classical information is transmitted between rounds.

While it is evident that at least some Bell violation is possible in as many rounds as needed, the trade-off behavior does put some constraints on the total available violation. For any given $\alpha$, let us set a minimum threshold on the Bell violation value $|\mathcal{B}_{\min}|$. Now, we require the players to tweak the $\lambda$ parameters in such a manner to attain at least $|\mathcal{B}_{\min}|$ at every round. From our analysis, we know that a successful violation at some round $k$ requires the sharpness parameter of that round $\lambda_k$ to be significantly higher than all the preceding rounds. At the same time the maximum $\lambda_k$ value at any round can not go beyond 1. This suggests that with a fixed $|\mathcal{B}_{\min}|$ and $\alpha$, there should only be a finite number of maximum rounds $N_{\max}(|\mathcal{B}_{\min}|, \alpha)$ such that a violation of at least $|\mathcal{B}_{\min}|$ is attained in every round up till $N_{\max}(|\mathcal{B}_{\min}|, \alpha)$. With the analytical expression of $|\mathcal{B}_k|$ with respect to $\lambda$ parameters Eq.~\eqref{eqn:Bellviolation_wrt_lambdaparams} available. We can easily evaluate  $N_{\max}(|\mathcal{B}_{\min}|, \alpha)$ numerically by choosing minimum $\lambda_k$ in successive rounds so that $|\mathcal{B}_{\min}|$ is obtained. We plot $N_{\max}(|\mathcal{B}_{\min}|, \alpha)$ to access this behavior.  

\begin{figure}[ht]
    \centering
\includegraphics[width=\linewidth]{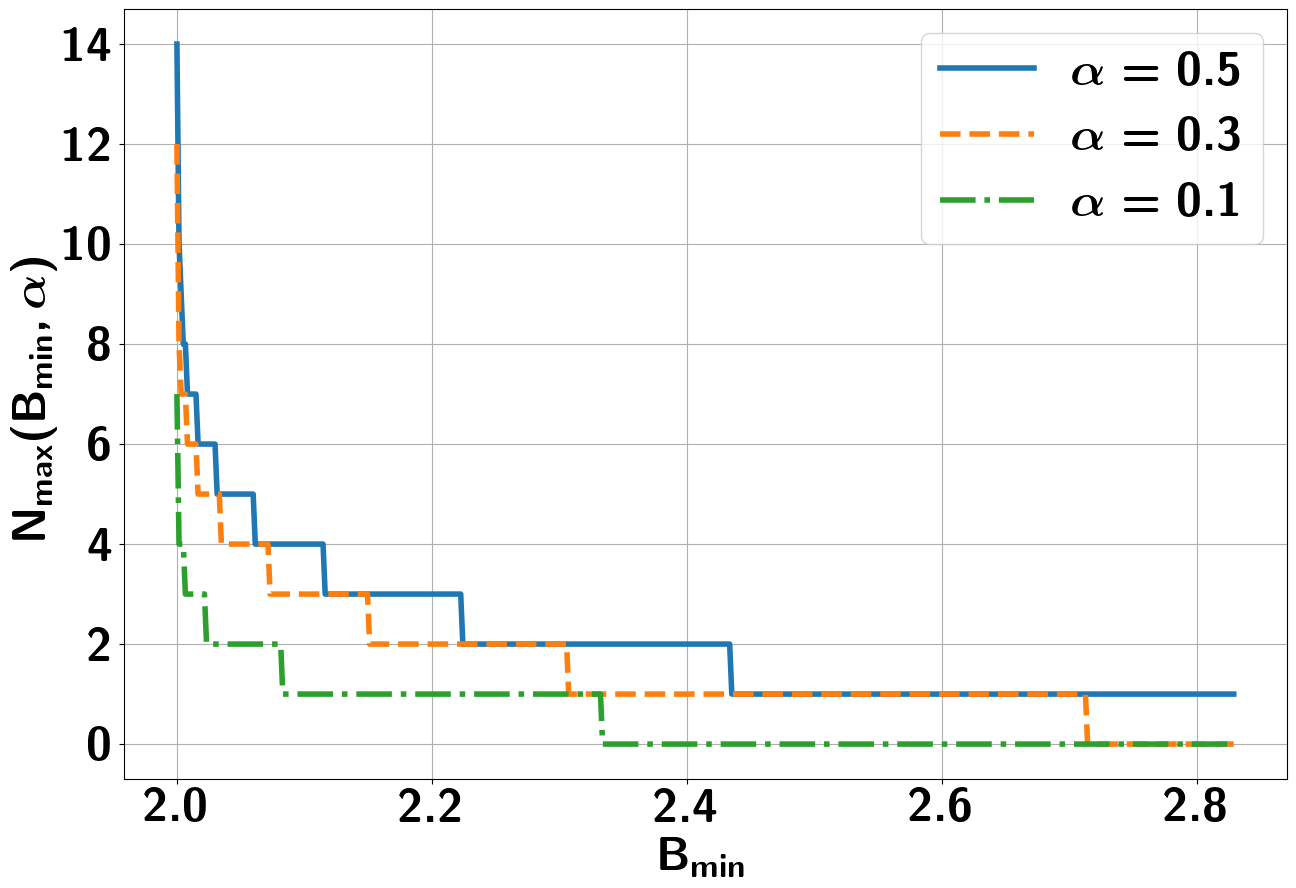}
    \caption{\textbf{Persistence of Bell Violation}. Maximum rounds of violation $N_{\max}(|\mathcal{B}_{\min}|, \alpha)$ (ordinate) with respect to violation threshold $|\mathcal{B}_{\min}|$. The $N_{\max}$ value approaches $\infty$ as the threshold $|\mathcal{B}_{\min}|$ approaches 2, as predicted by our analysis. All axes are dimensionless.}
    \label{fig:Rmax}
\end{figure}

Fig. \ref{fig:Rmax} demonstrates how $N_{\max}$ scales with $|\mathcal{B}_{\min}|$. Higher the $|\mathcal{B}_{\min}|$, the fewer the rounds in which this level of violation can be observed. We also see that $N_{\max}$ increases as $\alpha$ approaches $1/2$ as expected. While, $N_{\max}$ value is nominal for higher threshold values, it tends towards infinity as the threshold tends to 2.

\subsection{Random Access Bell Game with a W State}
\label{sec:w_state}

Our analysis has shown that a generalized GHZ state is a powerful resource for the proposed Bell game. It is instructive, therefore, to consider the case where the initial state is a W state, $\ket{W} = \frac{1}{\sqrt{3}}(\ket{001}+\ket{010}+\ket{100})$, $\rho_{CAB_{0}} = \ketbra{W}{W}$ instead of a GHZ state, while keeping the rest of the protocol the same. We find that with a W state, no Bell violation is possible even in the first round. This shows that the GHZ type entanglement is pertinent to our protocol.

\begin{theorem}
The random access Bell game fails for W states $\rho_{CAB_{0}} = \ketbra{W}{W}$, leading to $k_{\max} = 0$.
\end{theorem}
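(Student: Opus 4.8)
The plan is to show that a single \texttt{SWITCH} action already sends the W state to a fully separable (indeed classically correlated) tripartite state, so that no subsequent measurement on the control $C$ can generate any entanglement—let alone nonlocality—across the $A{:}B$ cut, forcing $k_{\max}=0$.

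First I would apply the Kraus operators of Eq.~\eqref{eqn:Switch_Kraus} directly to $\ket{W}=\tfrac{1}{\sqrt3}(\ket{001}+\ket{010}+\ket{100})$ in the $CAB$ ordering. A short computation gives $K_0\ket{W}=\tfrac{1}{\sqrt3}\ket{101}$, $K_2\ket{W}=\tfrac{1}{\sqrt3}\ket{010}$, $K_3\ket{W}=\tfrac{1}{\sqrt3}\ket{000}$, while $K_1=0$. The crucial structural observation is that each surviving Kraus branch retains exactly one of the three terms of $\ket{W}$ and maps it to a computational-basis product state, so every W-type coherence such as $\ket{001}\!\bra{010}$ is annihilated. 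Assembling $\rho_{CAB_1}=\sum_i K_i\ketbra{W}{W}K_i^\dagger$ then yields
\begin{equation}
\rho_{CAB_1}=\tfrac13\big(\ketbra{101}{101}+\ketbra{010}{010}+\ketbra{000}{000}\big),
\end{equation}
a diagonal mixture of orthogonal product states, hence fully separable across $C{:}A{:}B$.

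From here the conclusion follows with no further case analysis: since local operations and measurements cannot create entanglement, performing any (unsharp or projective) measurement on $C$ and conditioning on its outcome leaves the $A{:}B$ marginal separable. Equivalently, tracing out $C$ gives $\rho_{AB_1}=\tfrac13(\ketbra{00}{00}+\ketbra{01}{01}+\ketbra{10}{10})$, a diagonal state that trivially satisfies the PPT criterion; computing its $T_\rho$ matrix yields $T_{\rho_{AB_1}}=\mathrm{Diag}(0,0,-\tfrac13)$, so $M(\rho_{AB_1})=\tfrac19\le 1$ and $|\mathcal{B}_1|\le 2$. Thus no CHSH violation is available already at the first access node, i.e.\ $k_{\max}=0$.

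I expect essentially no obstacle here; the mathematical content is the single observation that the \texttt{SWITCH} drives the W state \emph{outside} the protected $\{\ket{000},\ket{111}\}$ subspace that powered the GHZ analysis, collapsing it to a classically correlated state in one step. The only point requiring care is to phrase the conclusion for the whole protocol rather than for one fixed measurement: stating it via the separability of the tripartite $\rho_{CAB_1}$ simultaneously covers every sharpness $\lambda_1$ (including the projective limit) and every post-measurement branch, in sharp contrast to the GHZ case where the \texttt{SWITCH} preserved the relevant coherence and unsharp measurements were indispensable.
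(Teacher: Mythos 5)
Your proof is correct, and its core computation coincides with the paper's: applying the Kraus operators of Eq.~\eqref{eqn:Switch_Kraus} to $\ket{W}$ kills every coherence and yields exactly the same classical mixture $\rho_{CAB_1}=\tfrac13\big(\ketbra{000}{000}+\ketbra{010}{010}+\ketbra{101}{101}\big)$. Where you diverge is in the finish. The paper follows its protocol literally: it writes out the post-measurement states for the unsharp $\ket{\pm}$-basis POVM, traces out $C$, observes that the resulting two-qubit state $\tfrac13(\ketbra{00}{00}+\ketbra{01}{01}+\ketbra{10}{10})$ is separable, and (in Appendix~\ref{app:proof_th5}) proves by induction that the same classical structure persists at every round $k$. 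You instead invoke the full separability of $\rho_{CAB_1}$ across $C{:}A{:}B$ to conclude in one stroke that no measurement on $C$ — any sharpness, any basis, any outcome — can create entanglement across $A{:}B$. This is a cleaner and slightly more general wrap-up: it covers control-measurement strategies outside the protocol's specific $\ket{\pm}$ POVMs, which the paper's explicit computation does not address, and it makes transparent why the mechanism fails (the W state leaves the protected $\{\ket{000},\ket{111}\}$ subspace). What the paper's longer route buys is the stronger persistence statement — failure at \emph{every} access node $k$, not just the first — though for the theorem as stated, round-1 failure already forces $k_{\max}=0$, so your argument suffices. One point worth making explicit: your ``measurements on $C$ cannot help'' step relies on the mixture having \emph{product} $AB$ components (full separability), not merely $C{:}AB$ separability; conditioning on $C$ could otherwise leave an entangled $AB$ state (as in entanglement swapping). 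Your state satisfies this, but the hypothesis should be named, since it is precisely what distinguishes the W case from the GHZ case where conditioning on $C$ does create $A{:}B$ nonlocality.
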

\begin{proof}
Starting with the W state, the state $\rho_{CAB_{1}}$ obtained after running the {\tt {\tt SWITCH}} action in the first round results in the state
\begin{eqnarray}
\rho_{CAB_{1}} &=& \frac{1}{3}(\ket{000}\bra{000}+\ket{101}\bra{101} +\ket{010}\bra{010}). \notag\\
\end{eqnarray}

A subsequent measurement on the control system C in the $\ket{\pm}$ basis yields 
\begin{eqnarray}
\rho_{CAB_{1}}^{E_\pm} &=& \frac{1\pm 1}{6}\ket{+00}\bra{+00}+\frac{1\mp 1}{6}\ket{-00}\bra{-00}
\notag\\ && +\frac{1\pm 1}{6}\ket{+01}\bra{+01}+\frac{1\mp 1}{6}\ket{-01}\bra{-01} \notag\\
&&+ \frac{1\pm 1}{6}\ket{+10}\bra{+10}+\frac{1\mp 1}{6}\ket{-10}\bra{-10}, \notag\\ 
\end{eqnarray}
which leads to the following joint state with Alice and Bob on the subsystem AB 
 \begin{eqnarray}
\rho_{AB_{1}}^{E_\pm} &=& \frac{1}{3}\left(\ket{00}\bra{00} + \ket{01}\bra{01}+\ket{10}\bra{10}\right) \notag\\
\end{eqnarray}
This is a separable state with respect to the A:B partition and therefore cannot violate a Bell inequality. Since the protocol fails to provide a violation in the first round, we conclude that the W state is unsuitable for this game, and thus $k_{\max} = 0$. A proof for the general case of an arbitrary round $k$ is provided in Appendix \ref{app:proof_th5}.
\end{proof}

We conclude that generalized GHZ state or GHZ type entanglement is an important resource for our protocol. While any generalised GHZ state with $\alpha \in (0,1)$  can provide Bell violations for an arbitrary number of rounds, the W state is unable to provide a violation in any round.\\

\section{Conclusion}
\label{sec:conclusion}
In this paper, we introduced the Random Access Bell Game as a task for studying the resilience of nonlocality against repeated environmental interactions. We analyzed a scenario where the noise is composed of two distinct entanglement-breaking pin maps (complete erasure channels) and demonstrated that conventional strategies, including probabilistic compositions or a canonical quantum {\tt SWITCH} setup, are insufficient to preserve nonlocality. Our main contribution is a protocol that synergistically combines three resources: the decoherence-free subspace created by the quantum {\tt SWITCH}, the controllable disturbance of sequential unsharp measurements, and the specific correlation structure of GHZ-type entanglement. We proved that our protocol not only succeeds in the game but can preserve the ability to violate the CHSH inequality for an arbitrarily large number of rounds, effectively achieving $k_{\text{max}} = \infty$. This result offers a strategy for mitigating environmental noise, demonstrating that quantum correlations can, in principle, be sustained indefinitely despite repeated interactions with entanglement-breaking environments.
%demonstrating that quantum correlations can be indefinitely sustained even through repeated encounters with entanglement-breaking environments.

%The work establishes a novel method for protecting quantum resources. 
Instead of correcting errors after they occur, our protocol utilizes the form of the system-environment interaction itself via indefinite causal order to create a decoherence-free subspace. We then use generalized measurements to manage the trade-off between information gain and disturbance to obtain success in the Random Access Bell Game. 
%This provides a valuable strategy against certain types of structured noise that circumvents the resource overhead of traditional approaches of shielding a system from noise or reactively correcting errors after they occur. 
 We have also shown that by strategically choosing the measurement sharpness parameters, one can achieve a near-maximal Bell violation at arbitrarily chosen round $k$, while still guaranteeing success in all preceding rounds. Interestingly, the protocol fails for W-states, thereby establishing the Random Access Bell Game as an operational task that can distinguish between GHZ and W states.
%This demonstrates a remarkable level of dynamic control over the nonlocal resources. 
%The established failure of the protocol for W-states demonstrates the specificity of our method and also provides a clear and practical operational distinction between the GHZ and W classes of tripartite entanglement in the context of noise resilience.

The findings open several avenues for future research. A natural next step is to investigate the protocol's performance against other, more general noise models, such as partial erasure channels, depolarizing, dephasing, or amplitude damping channels.
%to determine if similar protected subspaces arises for other noise models as well. 
The framework itself can be extended to protect other quantum resources beyond nonlocality, such as quantum steering or coherence, in similar sequential scenarios. Another possible route for generalization is to construct the game where each noisy block contains more than two channels, or an extension in a multipartite setting, involving more than two parties, possibly having higher dimensions.
%could yield new strategies for preserving multipartite nonlocality against collective noise. 
Finally, given that both the quantum {\tt SWITCH} and unsharp measurements have been realized experimentally, assessing the feasibility of implementing this sequential protocol in a laboratory setting presents a tangible yet exciting challenge.\\

\section*{Acknowledgment}
The authors acknowledge discussions with Pritam Halder and Tamal Guha.
SR acknowledges discussions with Arkaprabha Ghoshal and Pratik Ghosal at YouQu 2023, HRI, Prayagraj, India.

\bibliography{bib}

\appendix
\begin{widetext}

\section{Preliminaries}
\label{sec:preliminaries}

We review some preliminary concepts that form the foundation of our work. We begin by introducing the mathematical formalism of quantum channels to describe noisy dynamics. We then discuss the quantum {\tt SWITCH} as a means to create superpositions of causal orders, followed by an overview of the CHSH inequality for certifying Bell nonlocality. Finally, we detail generalized and unsharp measurements.

\subsection{Noisy Dynamics}
Quantum systems are rarely perfectly isolated from their surroundings. The dynamics of a system interacting with an external environment, often referred to as noise, are described by the theory of open quantum systems. Within this framework, the evolution of the system's density matrix, $\rho$, is modeled by a quantum channel: a linear map $\map E$ that is completely positive and trace-preserving (CPTP).

The action of any such channel can be expressed using the operator-sum representation (or Kraus representation):
\begin{equation}
    \map E(\rho) = \sum_i E_i \rho E_i^\dag,
\end{equation}
where the operators $\{E_i\}$, known as Kraus operators, describe the effect of the channel on the system's Hilbert space. The trace-preserving property of the channel, which ensures that the state remains normalized, imposes the condition $\sum_i E_i^\dag E_i = \mathbb{I}$, where $\mathbb{I}$ is the identity operator. Physically, each term in the sum can be thought of as a possible evolutionary path the quantum system can take due to the environmental interaction.

\subsection{The Quantum {\tt {\tt SWITCH}}}
While quantum channels describe the evolution of a system, higher-order operations known as supermaps can transform these channels themselves. The quantum {\tt SWITCH} is a prominent example of a supermap that takes two quantum channels, $\map E$ and $\map F$, as inputs and arranges them in a new configuration whose causal order is governed by a quantum degree of freedom \cite{Chiribella2009Aug, Chiribella2013Aug}. Specifically, the quantum {\tt SWITCH} generates a new channel, $\mathcal{S}(\map E,\map F)$, that acts on a target system and an ancillary control qubit. The order in which the channels $\map E$ and $\map F$ are applied to the target depends on the state of the control. If the control qubit is in the state $\ket{0}$, the channels are applied in the order $\map F \circ \map E$. If the control is in the state $\ket{1}$, the order is $\map E \circ \map F$. This conditional logic is captured by a single set of Kraus operators $\{S_{ij}\}$ acting on the joint target-control system:
\begin{equation}
    S_{ij} = F_{i} E_{j}\otimes\ketbra{0}{0} + E_{j} F_{i}\otimes\ketbra{1}{1},
\end{equation}
where $\{E_j\}$ and $\{F_i\}$ are the Kraus operators for the channels $\map E$ and $\map F$, respectively. The overall transformation is then given by
\begin{equation}
    \mathcal{S}(\map E, \map F)(\rho \otimes \rho_c) = \sum_{i,j} S_{ij} (\rho \otimes \rho_c) S_{ij}^\dag,
\end{equation}
where $\rho_c$ is the state of the control system. A remarkable feature of the quantum {\tt SWITCH} arises when the control qubit is prepared in a superposition, such as $\ket{+} = (\ket{0}+\ket{1})/\sqrt{2}$. In this case, the target system evolves through a quantum superposition of the two alternative causal orderings, $\map F \circ \map E$ and $\map E \circ \map F$. The {\tt {\tt SWITCH}} supermap possesses an indefinite causal order; that is, it can not be written as a convex combination of quantum combs. This marks a fundamentally non-classical feature that has been shown to provide advantages in various quantum information tasks.

\subsection{Bell Nonlocality and the CHSH Inequality}
A key objective of our work is to certify quantum correlations that defy classical explanation. The primary tool for this is the Bell test, specifically the Clauser-Horn-Shimony-Holt (CHSH) inequality \cite{Clauser1969Oct, Clauser1974Jul}. The CHSH test considers two spatially separated parties, Alice and Bob, who share a quantum state $\rho_{AB}$. Alice can perform one of two measurements, labeled $a$ and $a'$, while Bob can choose between two measurements, $b$ and $b'$. Each measurement has two possible outcomes, denoted $\pm 1$.

The principles of local realism, which assume that measurement outcomes are predetermined by local hidden variables and are independent of spacelike separated events, impose a classical bound on the correlations between their outcomes. These correlations are captured by the Bell expression:
\begin{equation}
 \mathcal{B} = E(a,b) + E(a', b) + E(a, b') - E(a', b'),
\label{eq:bellexp_redef}
\end{equation}
where $E(x,y) = \langle A_x B_y \rangle_\rho$ is the expectation value of the product of Alice's outcome for setting $x$ and Bob's for setting $y$. Local realism dictates that $|\mathcal{B}| \leq 2$. However, quantum mechanics predicts that this bound can be violated. For quantum systems, the maximum possible value is given by the Tsirelson bound, $|\mathcal{B}^Q|_{\max} = 2\sqrt{2}$ \cite{Cirelson1980Mar}. An observed violation of the classical bound, $|\mathcal{B}| > 2$, is a definitive signature of Bell nonlocality.

For any given two-qubit state $\rho$, the maximum possible CHSH violation can be calculated directly using Horodecki's criterion \cite{Horodecki1995May}. This involves constructing a $3 \times 3$ real matrix $T_\rho$ with elements $[T_{\rho}]_{ij} = \Tr[\rho(\sigma_i\otimes \sigma_j)]$ for $i,j \in \{1,2,3\}$, where $\sigma_i$ are the Pauli matrices. From this, one computes the matrix $U_\rho = T_\rho^T T_\rho$. If $u_1$ and $u_2$ are the two largest eigenvalues of $U_\rho$, then the maximal Bell violation for the state $\rho$ is given by $|\mathcal{B}|_{\max} = 2\sqrt{u_1+u_2}$. This provides a direct method to quantify the nonlocality of any two-qubit state.

\subsection{Unsharp Measurements}
The textbook concept of a projective measurement is an idealization. A more general and physically complete description of measurements in quantum mechanics is provided by the Positive Operator-Valued Measure (POVM) formalism. A POVM is a set of operators $\{E_l\}$ that satisfy two conditions: each operator $E_l$ is positive semi-definite ($E_l \ge 0$), and the set forms a resolution of the identity ($\sum_l E_l = \mathbb{I}$).

When a measurement described by the POVM $\{E_l\}$ is performed on a system in state $\rho$, the probability of obtaining outcome $l$ is given by Born's rule:
\begin{equation}
    p(l) = \Tr(\rho E_l).
\end{equation}
An "unsharp measurement" is a type of POVM that smoothly interpolates between a sharp, projective measurement and a trivial measurement that extracts no information. In our protocol, we employ two-outcome unsharp measurements on the control qubit in the basis $\{\ket{+},\ket{-}\}$. The corresponding POVM elements are
\begin{equation}
    E_{\pm}^{\lambda} = \lambda \ketbra{\pm}{\pm} + \frac{1-\lambda}{2} \mathbb I = \frac{1+\lambda}{2} \ketbra{\pm}{\pm} + \frac{1-\lambda}{2}\ketbra{\mp}{\mp},
\end{equation}
where the parameter $\lambda \in [0, 1]$ controls the "sharpness" of the measurement. When $\lambda=1$, this reduces to a standard projective measurement. When $\lambda=0$, the POVM elements become $E_{\pm} = \mathbb{I}/2$, and the measurement provides no information about the state.

Crucially, upon obtaining a specific outcome, the state of the system undergoes a conditional evolution (a process known as state reduction or collapse). Each POVM element can be expressed as $E_l = M_l^\dag M_l$, where $M_l$ is the corresponding measurement operator. If outcome $l$ is obtained, the state is updated according to the selective update rule:
\begin{equation}
    \rho \xrightarrow{\text{outcome } l} \rho'_l = \frac{M_l \rho M_l^\dag}{p(l)} = \frac{M_l \rho M_l^\dag}{\Tr(M_l \rho M_l^\dag)}.
\end{equation}
For the unsharp measurements in our protocol, the measurement operators can be chosen as $M_\pm = \sqrt{E_\pm^\lambda}$. This rule for updating the state conditioned on the measurement outcome is essential for analyzing the step-by-step evolution in our sequential protocol.

\section{Proof of Theorem \ref{Th:rhocab_general}}
\label{app:proof_th1}
The base cases for the induction, $k=1,2$, were verified in the main text. Here, we prove the inductive step of Theorem~\ref{Th:rhocab_general}. Assuming that our hypothesis is true for some $k\geq1$, thus

\begin{eqnarray}
\rho_{CAB_{k}} &=& \Bigg[\frac{1+R(k-1)}{4} +\frac{S(k-1)}{2}\Bigg]\ket{\text{GHZ}_{\alpha+}}\bra{\text{GHZ}_{\alpha+}} + \Bigg[\frac{1+R(k-1)}{4} -\frac{S(k-1)}{2}\Bigg]\ket{\text{GHZ}_{\alpha-}}\bra{\text{GHZ}_{\alpha-}} \notag\\
&+& \Bigg[\frac{1-R(k-1)}{4}\Bigg] \Bigg(2(1-\alpha)\ket{010}\bra{010}+ 2   \alpha\ket{101}\bra{101}\Bigg) 
\end{eqnarray}
We proceed with the protocol to derive the state $\rho_{CAB_{k+1}}$ and show that it indeed satisfies the above form for $k+1$. Now, $\rho_{CAB_{k}}$ undergoes the unsharp measurement described by POVMs in Eq.~\eqref{eqn:unsharp_measure} with measurement operators
$\sqrt{E_\pm^{\lambda_k}} = \sqrt{\frac{1\pm \lambda_k}{2}}\ket{+}\bra{+} + \sqrt{\frac{1\mp \lambda_k}{2}}\ket{-}\bra{-}$. These produce the general post-measurement states 

\begin{eqnarray}
\rho_{CAB_{k}}^{E_\pm} &=& \Bigg[\frac{1+R(k-1)}{4} +\frac{S(k-1)}{2}\Bigg]\Bigg(\frac{1\pm \lambda_k}{2} \ket{+\Phi^+_\alpha}\bra{+\Phi^+_\alpha}+\frac{1\mp \lambda_k}{2} \ket{-\Phi^-_\alpha}\bra{-\Phi^-_\alpha} \notag\\ && \hspace{6cm}+\frac{\sqrt{1- \lambda_k^2}}{2} \ket{+\Phi^+_\alpha}\bra{-\Phi^-_\alpha}+\frac{\sqrt{1- \lambda_k^2}}{2} \ket{-\Phi^-_\alpha}\bra{+\Phi^+_\alpha}\Bigg) \notag\\
&&+ \Bigg[\frac{1+R(k-1)}{4} -\frac{S(k-1)}{2}\Bigg]\Bigg(\frac{1\pm \lambda_k}{2} \ket{+\Phi^-_\alpha}\bra{+\Phi^-_\alpha}+\frac{1\mp \lambda_k}{2} \ket{-\Phi^+_\alpha}\bra{-\Phi^+_\alpha}  \notag\\ && \hspace{6cm}+\frac{\sqrt{1- \lambda_k^2}}{2} \ket{+\Phi^-_\alpha}\bra{-\Phi^+_\alpha}+\frac{\sqrt{1- \lambda_k^2}}{2} \ket{-\Phi^+_\alpha}\bra{+\Phi^-_\alpha}\Bigg) \notag\\
&&+ \Bigg[\frac{1-R(k-1)}{4}\Bigg] \Bigg(\frac{1\pm \lambda_k}{2}\Bigg\{\ket{+\Psi^+_\alpha}\bra{+\Psi^+_\alpha} + \ket{+\Psi^-_\alpha}\bra{+\Psi^-_\alpha} \Bigg\} \notag\\ && \hspace{1cm}+ \frac{1\mp \lambda_k}{2}\Bigg\{\ket{-\Psi^+_\alpha}\bra{-\Psi^+_\alpha} + \ket{-\Psi^-_\alpha}\bra{-\Psi^-_\alpha} \Bigg\} \notag\\ && \hspace{2cm}+ \frac{\sqrt{1- \lambda_k^2}}{2}\Bigg\{(1-\alpha)\ket{010}\bra{010}+\alpha\ket{101}\bra{101}-(1-\alpha)\ket{110}\bra{110}-\alpha\ket{001}\bra{001}\Bigg\}\Bigg).
\label{eq:appendix_postmeasurement_state}
\end{eqnarray}

 With this state at hand the referee R can ask to deduce Bell violation in the A:B cut. Alternatively, the state can be sent to the next round of protocol where the {\tt {\tt SWITCH}} action from 
Eq.~\eqref{eqn:Switch_Kraus} takes place to produce the desired state $\rho_{CAB_{k+1}}$. 

\begin{eqnarray}
&&\rho_{CAB_{k+1}} = \notag\\
&&\Bigg[\frac{1+R(k-1)}{4} +\frac{S(k-1)}{2}\Bigg]\Bigg(\frac{1+ \sqrt{1-\lambda_k^2}}{2} \ket{\text{GHZ}_{\alpha+}}\bra{\text{GHZ}_{\alpha+}}+\frac{1- \sqrt{1-\lambda_k^2}}{4} \Bigg\{2(1-\alpha)\ket{010}\bra{010} + 2\alpha\ket{101}\bra{101}\Bigg\}\Bigg)  \notag\\
&&+ \Bigg[\frac{1+R(k-1)}{4} -\frac{S(k-1)}{2}\Bigg]\Bigg(\frac{1+ \sqrt{1-\lambda_k^2}}{2} \ket{\text{GHZ}_{\alpha-}}\bra{\text{GHZ}_{\alpha-}}+\frac{1- \sqrt{1-\lambda_k^2}}{4} \Bigg\{2(1-\alpha)\ket{010}\bra{010} + 2\alpha\ket{101}\bra{101}\Bigg\}\Bigg) \notag\\
&&+ \Bigg[\frac{1-R(k-1)}{4}\Bigg] \Bigg(\alpha\ket{000}\bra{000}+(1-\alpha)\ket{010}\bra{010}+\alpha\ket{101}\bra{101}+(1-\alpha)\ket{111}\bra{111}\Bigg)  \notag\\ && + \Bigg[\frac{1-R(k-1)}{4}\sqrt{1- \lambda_k^2}\Bigg]\Bigg(-\alpha\ket{000}\bra{000}+(1-\alpha)\ket{010}\bra{010}+\alpha\ket{101}\bra{101}-(1-\alpha)\ket{111}\bra{111}\Bigg).
\end{eqnarray}

Now, notice that we can rewrite $2\alpha\ket{000}\bra{000} + 2(1-\alpha)\ket{111}\bra{111} = \ket{\text{GHZ}_{\alpha+}}\bra{\text{GHZ}_{\alpha+}} + \ket{\text{GHZ}_{\alpha-}}\bra{\text{GHZ}_{\alpha-}}$ to get

\begin{eqnarray}
&&\rho_{CAB_{k+1}} = \notag\\
&&\Bigg[\frac{1+R(k-1)}{4} +\frac{S(k-1)}{2}\Bigg]\Bigg(\frac{1+ \sqrt{1-\lambda_k^2}}{2} \ket{\text{GHZ}_{\alpha+}}\bra{\text{GHZ}_{\alpha+}}+\frac{1- \sqrt{1-\lambda_k^2}}{4} \Bigg\{2(1-\alpha)\ket{010}\bra{010} + 2\alpha\ket{101}\bra{101}\Bigg\}\Bigg)  \notag\\
&&+ \Bigg[\frac{1+R(k-1)}{4} -\frac{S(k-1)}{2}\Bigg]\Bigg(\frac{1+ \sqrt{1-\lambda_k^2}}{2} \ket{\text{GHZ}_{\alpha-}}\bra{\text{GHZ}_{\alpha-}}+\frac{1- \sqrt{1-\lambda_k^2}}{4} \Bigg\{2(1-\alpha)\ket{010}\bra{010} + 2\alpha\ket{101}\bra{101}\Bigg\}\Bigg) \notag\\
&&+ \Bigg[\frac{1-R(k-1)}{4}\Bigg] \Bigg(\frac{1}{2}\ket{\text{GHZ}_\alpha+}\bra{\text{GHZ}_{\alpha+}}+(1-\alpha)\ket{010}\bra{010}+\alpha\ket{101}\bra{101}+\frac{1}{2}\ket{\text{GHZ}_{\alpha-}}\bra{\text{GHZ}_{\alpha-}}\Bigg)  \notag\\ && + \Bigg[\frac{1-R(k-1)}{4}\sqrt{1- \lambda_k^2}\Bigg]\Bigg(-\frac{1}{2}\ket{\text{GHZ}_{\alpha+}}\bra{\text{GHZ}_{\alpha+}}+(1-\alpha)\ket{010}\bra{010}+\alpha\ket{101}\bra{101}-\frac{1}{2}\ket{\text{GHZ}_{\alpha-}}\bra{\text{GHZ}_{\alpha-}}\Bigg). \notag\\
\end{eqnarray}
Collecting similar terms together

\begin{eqnarray}
\rho_{CAB_{k+1}} &=&\Bigg(\Bigg[\frac{1+R(k-1)}{4} +\frac{S(k-1)}{2}\Bigg]\frac{1+\sqrt{1-\lambda_k^2}}{2} +\frac{1-R(k-1)}{4} \Bigg[\frac{1}{2}-\frac{\sqrt{1-\lambda_k^2}}{2}\Bigg]\Bigg)\ket{\text{GHZ}_{\alpha+}}\bra{\text{GHZ}_{\alpha+}} \notag\\
&+&\Bigg(\Bigg[\frac{1+R(k-1)}{4} -\frac{S(k-1)}{2}\Bigg]\frac{1+\sqrt{1-\lambda_k^2}}{2} +\frac{1-R(k-1)}{4} \Bigg[\frac{1}{2}-\frac{\sqrt{1-\lambda_k^2}}{2}\Bigg]\Bigg)\ket{\text{GHZ}_{\alpha-}}\bra{\text{GHZ}_{\alpha-}} \notag\\
&+&\Bigg(\frac{1+R(k-1)}{2}\frac{1-\sqrt{1-\lambda_k^2}}{4} + \frac{1-R(k-1)}{4}\Bigg[\frac{1}{2}+\frac{\sqrt{1-\lambda_k^2}}{2}\Bigg]\Bigg)\left(2(1-\alpha)\ket{010}\bra{010} + 2\alpha\ket{101}\bra{101}\right),\notag\\
\end{eqnarray}
and simplifying the terms in the bracket
\begin{eqnarray}
\Bigg[\frac{1+R(k-1)}{4} +\frac{S(k-1)}{2}\Bigg]\frac{1+\sqrt{1-\lambda_k^2}}{2} +\frac{1-R(k-1)}{4} \Bigg[\frac{1}{2}-\frac{\sqrt{1-\lambda_k^2}}{2}\Bigg] &=& \frac{S(k)}{2} + \frac{1}{8} +\frac{R(k-1)}{8} + \frac{\sqrt{1-\lambda_k^2}}{8} + \frac{R(k)}{8} \notag\\
&& + \frac{1}{8} -\frac{R(k-1)}{8} - \frac{\sqrt{1-\lambda_k^2}}{8} + \frac{R(k)}{8} \notag\\
&=& \frac{1+R(k)}{4} + \frac{S(k)}{2}, \notag\\
\Bigg[\frac{1+R(k-1)}{4} -\frac{S(k-1)}{2}\Bigg]\frac{1+\sqrt{1-\lambda_k^2}}{2} +\frac{1-R(k-1)}{4} \Bigg[\frac{1}{2}-\frac{\sqrt{1-\lambda_k^2}}{2}\Bigg] &=& -\frac{S(k)}{2} + \frac{1}{8} +\frac{R(k-1)}{8} + \frac{\sqrt{1-\lambda_k^2}}{8} + \frac{R(k)}{8} \notag\\
&& + \frac{1}{8} -\frac{R(k-1)}{8} - \frac{\sqrt{1-\lambda_k^2}}{8} + \frac{R(k)}{8} \notag\\
&=& \frac{1+R(k)}{4} - \frac{S(k)}{2}, \notag\\
\frac{1+R(k-1)}{2}\frac{1-\sqrt{1-\lambda_k^2}}{4} + \frac{1-R(k-1)}{4}\Bigg[\frac{1}{2}+\frac{\sqrt{1-\lambda_k^2}}{2}\Bigg] &=& \frac{1}{8} +\frac{R(k-1)}{8} - \frac{\sqrt{1-\lambda_k^2}}{8} - \frac{R(k)}{8} \notag\\
&& + \frac{1}{8} -\frac{R(k-1)}{8} + \frac{\sqrt{1-\lambda_k^2}}{8} - \frac{R(k)}{8} \notag\\
&=& \frac{1-R(k)}{4}. 
\end{eqnarray}
We get the final outcome for the state in the $k+1$th round 

\begin{eqnarray}
\rho_{CAB_{k+1}} &=& \Bigg[\frac{1+R(k)}{4} +\frac{S(k)}{2}\Bigg]\ket{\text{GHZ}_{\alpha+}}\bra{\text{GHZ}_{\alpha+}} + \Bigg[\frac{1+R(k)}{4} -\frac{S(k)}{2}\Bigg]\ket{\text{GHZ}_{\alpha-}}\bra{\text{GHZ}_{\alpha-}} \notag\\
&+& \Bigg[\frac{1-R(k)}{4}\Bigg] \Bigg(2(1-\alpha)\ket{010}\bra{010}+2\alpha \ket{101}\bra{101}\Bigg),
\end{eqnarray}

thus proving the induction step. 

\section{Proof of Theorem \ref{Th:Tmatrix}}
\label{app:proof_th2}

We begin with the post-measurement quantum states of the AB subsystem during the $k$th round of the protocol 

\begin{eqnarray}
\rho_{AB_{k}}^{E_\pm}&=& \Bigg[\frac{1+R(k-1)}{4} +\frac{S(k-1)}{2}\Bigg]\Bigg(\frac{1\pm \lambda_k}{2} \ket{\Phi^+_\alpha}\bra{\Phi^+_\alpha}+\frac{1\mp \lambda_k}{2} \ket{\Phi^-_\alpha}\bra{\Phi^-_\alpha} \Bigg) \notag\\
&&+ \Bigg[\frac{1+R(k-1)}{4} -\frac{S(k-1)}{2}\Bigg]\Bigg(\frac{1\pm \lambda_k}{2} \ket{\Phi^-_\alpha}\bra{\Phi^-_\alpha}  +\frac{1\mp \lambda_k}{2} \ket{\Phi^+_\alpha}\bra{\Phi^+_\alpha} \Bigg) \notag \\
&&+ \Bigg[\frac{1-R(k-1)}{4}\Bigg] \Bigg(\ket{\Psi^+_\alpha}\bra{\Psi^+_\alpha} + \ket{\Psi^-_\alpha}\bra{\Psi^-_\alpha}\Bigg).
\end{eqnarray}

With $T_\rho$ matrix being linear on quantum states and the state being in the Bell basis the matrix can be calculated as

\begin{eqnarray}
&&T_{\rho_{AB_{k}}^{E_\pm}}  \notag\\
&&=\Bigg[\frac{1+R(k-1)}{4} +\frac{S(k-1)}{2}\Bigg]\Bigg(\frac{1\pm \lambda_k}{2} \text{Diag}( 2\sqrt{\alpha(1-\alpha)},- 2\sqrt{\alpha(1-\alpha)},1)+\frac{1\mp \lambda_k}{2} \text{Diag}(- 2\sqrt{\alpha(1-\alpha)}, 2\sqrt{\alpha(1-\alpha)},1) \Bigg) \notag\\
&&+ \Bigg[\frac{1+R(k-1)}{4} -\frac{S(k-1)}{2}\Bigg]\Bigg(\frac{1\pm \lambda_k}{2} \text{Diag}(- 2\sqrt{\alpha(1-\alpha)}, 2\sqrt{\alpha(1-\alpha)},1) +\frac{1\mp \lambda_k}{2}\text{Diag}( 2\sqrt{\alpha(1-\alpha)},- 2\sqrt{\alpha(1-\alpha)},1) \Bigg) \notag \\
&&+ \Bigg[\frac{1-R(k-1)}{4}\Bigg] \Bigg(\text{Diag}( 2\sqrt{\alpha(1-\alpha)}, 2\sqrt{\alpha(1-\alpha)},-1) + \text{Diag}(- 2\sqrt{\alpha(1-\alpha)},- 2\sqrt{\alpha(1-\alpha)},-1)\Bigg) \notag\\
&&= \Bigg[\frac{1+R(k-1)}{4} +\frac{S(k-1)}{2}\Bigg]\text{Diag}(\pm  2\sqrt{\alpha(1-\alpha)}\lambda_k, \mp  2\sqrt{\alpha(1-\alpha)}\lambda_k,1) \notag\\
&&+ \Bigg[\frac{1+R(k-1)}{4} -\frac{S(k-1)}{2}\Bigg]\text{Diag}(\mp  2\sqrt{\alpha(1-\alpha)}\lambda_k, \pm  2\sqrt{\alpha(1-\alpha)}\lambda_k,1) + \Bigg[\frac{1-R(k-1)}{4}\Bigg] \text{Diag}(0,0,-2) \notag\\
&&= \text{Diag}\Big(\pm  2\sqrt{\alpha(1-\alpha)}S(k-1)\lambda_k, ~ \mp 2\sqrt{\alpha(1-\alpha)} S(k-1)\lambda_k, ~R(k-1)\Big)
\end{eqnarray}

\section{Proof of Theorem \ref{Th:theorem_Bellviolation} }
\label{app:proof_th3}
For any scaling factor $q$, with which we can set our sharpness parameters to be $\{\lambda_k = 1, ~\lambda_{k-1} = 1/q, \dots, \lambda_m = 1/q^{k-m},\dots,\lambda_1 = 1/q^{k-1}\}$. We wish to show that
\begin{eqnarray}
4\alpha(1-\alpha)\Bigg[\prod_{i=1}^{m-1}\frac{1+\sqrt{1-1/q^{2(k-i)}}}{2}\Bigg]^2 \frac{1}{q^{2(k-m)}} + \Bigg[\prod_{i=1}^{m-1}1-1/q^{2(k-i)}\Bigg]\geq1.
\end{eqnarray}

Now, let $k>m\ge1$ be integers and define
\begin{eqnarray}
X_{m,q} \;=\;4\alpha    (1-\alpha)\Biggl[\prod_{i=1}^{m-1}\frac{1+\sqrt{1-q^{-2(k-i)}}}{2}\Biggr]^{2},
\qquad
Y_{m,q}\;=\;\prod_{i=1}^{m-1}\bigl(1-q^{-2(k-i)}\bigr).
\end{eqnarray}
We wish to show that $X_{m,q}\,q^{-2(k-m)} \;+\;Y_{m,q} \;\ge\;1,$ holds by induction on $m$.

\medskip

\noindent\textbf{Base case ($m=1$).}
When $m=1$, both products are empty, so $X_{1,q}=4\alpha(1-\alpha), ~~Y_{1,q}=1$.  Thus
\begin{eqnarray}
X_{1,q}\,q^{-2(k-1)} + Y_{1,q}
= 1\cdot 4\alpha(1-\alpha) q^{-2(k-1)} + 1
= 1 + 4\alpha(1-\alpha)q^{-2(k-1)}
\ge 1.
\end{eqnarray}
Hence the claim holds for $m=1$.

\medskip

\noindent\textbf{Inductive step.}
Assume the statement holds for some $1\le m\le k$:
\begin{eqnarray}
X_{m,q}\,q^{-2(k-m)} + Y_{m,q} \;\ge\;1.
\end{eqnarray}
We must show it for $m+1$. Setting $ t \;=\;q^{-2(k-m)}, \quad s \;=\;\sqrt{1 - t}.$ Then by definition
\begin{eqnarray}
X_{m+1\,q}
= X_{m,q}\Bigl(\tfrac{1+s}{2}\Bigr)^2,
\quad
Y_{m+1}=Y_m(1-t),
\quad
q^{-2(k-(m+1))}=q^2t.
\end{eqnarray}
Since
\begin{eqnarray}
\Bigl(\tfrac{1+s}{2}\Bigr)^2
=\frac{1+2s+s^2}{4} = \frac{2+2s-t}{4}=\frac{1+s}{2}-\frac{t}{4},
\end{eqnarray}
we obtain
\begin{eqnarray}
X_{m+1,q}\,q^{-2(k-(m+1))}
= X_m\bigl[q^2(1+s)t/2 - q^2t^2/4\bigr].
\end{eqnarray}
Hence
\begin{eqnarray}
\mathrm{LHS}_{m+1,q}
&=& X_{m+1,q}\,q^{-2(k-(m+1))} + Y_{m+1,q} \notag\\
&=& X_{m,q}\bigl[q^2(1+s)t/2 - q^2t^2/4\bigr] + Y_{m,q}(1-t) \notag\\
&=& X_{m,q}\bigl[t + t\bigl(q^2/2+q^2s/2-q^2t/4-1\bigr)\bigr] + Y_{m,q}(1-t) \notag\\
&=& \underbrace{\bigl[X_{m,q}\,t + Y_{m,q}\bigr]}_{\ge1\text{ by IH}}
  \;+\;
  t\bigl[X_{m,q}\,(q^2/2+q^2s/2-q^2t/4-1)\;-\;Y_{m,q}\bigr].
\end{eqnarray}
Call the second bracket $\Delta$.  Then $\Delta
= t\bigl[X_{m,q}\,(q^2/2+q^2s/2-q^2t/4-1)\;-\;Y_{m,q}\bigr].$ Because $t>0$, it suffices to show $
4\alpha(1-\alpha)\Biggl[\prod_{i=1}^{m-1}\frac{1+\sqrt{1-q^{-2(k-i)}}}{2}\Biggr]^{2}\,(q^2/2+q^2s/2-q^2t/4-1)\;\geq\;\prod_{i=1}^{m-1}\bigl(1-q^{-2(k-i)}\bigr)$. We note that
\begin{eqnarray}
  &&4\alpha(1-\alpha)(q^2/2+q^2s/2-q^2t/4-1) \geq1 \notag\\
  &\iff& q^2(2+2s-t)\alpha(1-\alpha) \geq 2 \notag\\
  &\iff&q^2 \geq \frac{2}{(2+2s-t)\alpha(1-\alpha)}
  \end{eqnarray}
  
  Since $ s = \sqrt{1 - t}
    \quad\Longrightarrow\quad
    2+2s-t \ge 1$
  Indeed $ t = q^{-2(k-m)} \le 1$ and at $t = 1,$ $2+2s-t = 1$. Thus setting
  \begin{eqnarray}
  q  &\geq& \sqrt{\frac{2}{a(1-\alpha)}} \; \{0<a<1\} \notag\\
  \text{we get } q^2 &\geq& \frac{2}{a(1-\alpha)} \ge\frac{2}{(2+2s-t)\alpha(1-\alpha)}
  \end{eqnarray}
Furthermore, as each factor satisfies
  \begin{eqnarray}
    \Bigl(\tfrac{1 + \sqrt{1 - q^{-2(k-i)}}}{2}\Bigr)
    \;\ge\;
    \sqrt{1 - q^{-2(k-i)}},
  \end{eqnarray}
  hence $ \Biggl[\prod_{i=1}^{m-1}\frac{1+\sqrt{1-q^{-2(k-i)}}}{2}\Biggr]^{2}\;\ge\; \prod_{i=1}^{m-1}\bigl(1-q^{-2(k-i)}\bigr).$
Therefore
\begin{eqnarray}
X_{m,q}(q^2/2+q^2s/2-q^2t/4-1)-Y_{m,q} \geq0,
\end{eqnarray}
so 
\begin{eqnarray}
\Delta = t\bigl[X_{m,q}\,(q^2/2+q^2s/2-q^2t/4-1)\;-\;Y_{m,q}\bigr]\;\ge\;0.
\end{eqnarray}
Combining,
\begin{eqnarray}
\mathrm{LHS}_{m+1,q}
\;=\;
\bigl[X_{m,q}\,t + Y_{m,q}\bigr] + \Delta
\;\ge\;1 + 0
\;=\;1,
\end{eqnarray}
which completes the inductive step.

\section{Failure of W state for round $k$}
\label{app:proof_th5}

We prove by induction that when the initial resource is a W state, the state $\rho_{CAB_{k}}$ obtained after running the protocol for $k-1$ rounds and implementing the {\tt {\tt SWITCH}} action in the $k$-th round has the form

\begin{eqnarray}
\rho_{CAB_{k}} &=& \frac{1}{3}(\ket{000}\bra{000}+\ket{101}\bra{101}) +\left[\frac{1+R(k-1)}{6}\right]\ket{010}\bra{010} +\left[\frac{1-R(k-1)}{6}\right]\ket{111}\bra{111}
\end{eqnarray}
Let us first verify if the induction hypothesis is true for $k=1$. Starting with the W state. The state after the {\tt {\tt SWITCH}} action is 
$$\rho_{CAB_1} = \mathcal{K}(\ket{W}\bra{W}) = \frac{1}{3}\left(\ket{000}\bra{000}+\ket{010}\bra{010}+\ket{101}\bra{101}\right).$$
 Thus, the induction hypothesis is true at $k=1$. Assuming our hypothesis is true for some $k\geq 1$, we prove the inductive step by deriving the form of $\rho_{CAB_{k+1}}$. For this we perform the unsharp measurement Eq.~\eqref{eqn:unsharp_measure} with measurement operators
$\sqrt{E_\pm^{\lambda_k}} = \sqrt{\frac{1\pm \lambda_k}{2}}\ket{+}\bra{+} + \sqrt{\frac{1\mp \lambda_k}{2}}\ket{-}\bra{-}$ in the control qubit C, to obtain post measurement state

\begin{eqnarray}
\rho_{CAB_{k}}^{E_\pm} &=& \frac{1}{6}\left((1\pm \lambda_k)\ket{+00}\bra{+00}+(1\mp \lambda_k)\ket{-00}\bra{-00}+\sqrt{1-\lambda_k^2}\ket{+00}\bra{-00}+\sqrt{1-\lambda_k^2}\ket{-00}\bra{+00}\right)
\notag\\ && +\frac{1}{6}\left((1\pm \lambda_k)\ket{+01}\bra{+01}+(1\mp \lambda_k)\ket{-01}\bra{-01}-\sqrt{1-\lambda_k^2}\ket{+01}\bra{-01}-\sqrt{1-\lambda_k^2}\ket{-01}\bra{+01}\right) \notag\\
&&+ \left[\frac{1+R(k-1)}{6}\right]\left(\frac{1\pm \lambda_k}{2}\ket{+10}\bra{+10}+\frac{1\mp \lambda_k}{2}\ket{-10}\bra{-10}+\frac{\sqrt{1-\lambda_k^2}}{2}\ket{+10}\bra{-10}+\frac{\sqrt{1-\lambda_k^2}}{2}\ket{-10}\bra{+10}\right)  \notag\\ &&  +\left[\frac{1-R(k-1)}{6}\right]\left(\frac{1\pm \lambda_k}{2}\ket{+11}\bra{+11}+\frac{1\mp \lambda_k}{2}\ket{-11}\bra{-11}-\frac{\sqrt{1-\lambda_k^2}}{2}\ket{+11}\bra{-11}-\frac{\sqrt{1-\lambda_k^2}}{2}\ket{-11}\bra{+11}\right).\notag\\
\label{eq:appendix_postmeasurement_state_W}
\end{eqnarray}

This state is then sent for the next round of the protocol where the {\tt {\tt SWITCH}} action takes place to obtain $\rho_{CAB_{k+1}}$.

\begin{eqnarray}
\rho_{CAB_{k+1}} &=& \frac{1+\sqrt{1-\lambda_k^2}}{6}\ket{000}\bra{000} + \frac{1-\sqrt{1-\lambda_k^2}}{6}\ket{101}\bra{101} \notag\\
&&+\frac{1-\sqrt{1-\lambda_k^2}}{6}\ket{000}\bra{000} + \frac{1+\sqrt{1-\lambda_k^2}}{6}\ket{101}\bra{101} \notag\\
&&+\left[\frac{1+R(k-1)}{6}\right]\left(\frac{1+\sqrt{1-\lambda_k^2}}{2}\ket{010}\bra{010} + \frac{1-\sqrt{1-\lambda_k^2}}{2}\ket{111}\bra{111} \right)\notag\\
&&+\left[\frac{1-R(k-1)}{6}\right]\left(\frac{1-\sqrt{1-\lambda_k^2}}{2}\ket{010}\bra{010} + \frac{1+\sqrt{1-\lambda_k^2}}{2}\ket{111}\bra{111}\right).
\end{eqnarray}

Collecting similar terms together

\begin{eqnarray}
\rho_{CAB_{k+1}} &=& \left[\frac{1+\sqrt{1-\lambda_k^2}}{6}+\frac{1-\sqrt{1-\lambda_k^2}}{6}\right]\ket{000}\bra{000} \notag\\
&&+\left[\frac{1+\sqrt{1-\lambda_k^2}}{6}+\frac{1-\sqrt{1-\lambda_k^2}}{6}\right]\ket{101}\bra{101} \notag\\
&&+\left[\frac{1+R(k-1)}{6}\frac{1+\sqrt{1-\lambda_k^2}}{2}+\frac{1-R(k-1)}{6}\frac{1-\sqrt{1-\lambda_k^2}}{2}\right]\ket{010}\bra{010}  \notag\\
&&+\left[\frac{1+R(k-1)}{6}\frac{1-\sqrt{1-\lambda_k^2}}{2}+\frac{1-R(k-1)}{6}\frac{1+\sqrt{1-\lambda_k^2}}{2}\right]\ket{111}\bra{111}.
\end{eqnarray}

which is simplified to 

\begin{eqnarray}
\rho_{CAB_{k+1}} &=& \frac{1}{3}(\ket{000}\bra{000}+\ket{101}\bra{101}) +\left[\frac{1+R(k)}{6}\right]\ket{010}\bra{010} +\left[\frac{1-R(k)}{6}\right]\ket{111}\bra{111},
\end{eqnarray}

and thus the induction step is proved. Much like the generalized GHZ state, the state is independent of whichever post-measurement state was selected in all the previous rounds.  Performing an unsharp measurement on the control qubit C yields the post-measurement state Eq.~\eqref{eq:appendix_postmeasurement_state_W}. The control of this post-measurement state is then traced out to obtain the joint state on subsystems AB: 
    \begin{eqnarray}
\rho_{AB_{k}}^{E_\pm} &=& \frac{1}{3}\left(\ket{00}\bra{00} + \ket{01}\bra{01}\right)+ \left[\frac{1+R(k-1)}{6}\right]\ket{10}\bra{10} +\left[\frac{1-R(k-1)}{6}\right] \ket{11}\bra{11}.\notag\\
\end{eqnarray}
This state is a separable state with respect to the A:B partition and therefore cannot violate any Bell inequality. We conclude that the W state cannot provide any Bell violation in any round, and thus $k_{\max} = 0$.
\end{widetext}
\end{document}